\newtheorem{theorem}{Theorem}[section]
\newtheorem{lemma}{Lemma}[section]
\newtheorem{observ}{Observation}[section]
\newtheorem{definition}{Definition}[section]
\newtheorem{remark}{Remark}[section]
\providecommand{\keywords}[1]
{
  \small	
  \textbf{\textit{Keywords---}} #1
}
\begin{document}

\title{\bf {Distance-2-Dispersion:\\ Dispersion with Further
Constraints}}
\author{
Tanvir Kaur \footnotemark[1] \and Kaushik Mondal\footnotemark[1]
}

\date{ }
\maketitle
\def\thefootnote{\fnsymbol{footnote}}
\footnotetext[1]{
\noindent Department of Mathematics, Indian Institute of Technology Ropar, India
{\tt tanvir.20maz0001@iitrpr.ac.in, kaushik.mondal@iitropar.ac.in} }

%\maketitle    

\begin{abstract}
The aim of the dispersion problem is to place a set of $k(\leq n)$ mobile robots in the nodes of an unknown graph consisting of $n$ nodes such that in the final configuration each node contains at most one robot, starting from any arbitrary initial configuration of the robots on the graph. In this work we propose a variant of the dispersion problem where we start with any number of robots, and put an additional constraint that no two adjacent nodes contain robots in the final configuration. That is, the distance between any two nodes with robots must be at least 2. We name this problem as Distance-2-Dispersion, in short, D-2-D. However, even if  the number of robots $k$ is less than $n$, it might be the case that it is not possible for each robot to find a distinct node to reside, maintaining our added constraint. To be more specific, if a maximal independent set is already formed by the nodes which contain a robot each, then other robots, if any, who are searching for a node to seat, will not find one to seat. Hence we allow multiple robots to seat on some nodes only if there is no place to seat. If $k\geq n$, it is guaranteed that the nodes with robots form a maximal independent set of the underlying network.

The graph $G=(V, E)$ has $n$ nodes and $m$ edges, where nodes are anonymous. It is a port labelled graph, i.e., each node $u$ assigns a distinct port number to each of its incident edges from a range $[0,\delta-1]$ where $\delta$ is the degree of the node $u$. The robots have unique ids in the range $[1, L]$, where $L \ge k$. Co-located robots can communicate among themselves. We provide an algorithm that solves D-2-D starting from a rooted configuration (i.e., initially all the robots are co-located) and terminate after $2\Delta(8m-3n+3)$ synchronous rounds using $O(log \Delta)$ memory per robot without using any global knowledge of the graph parameters $m$, $n$ and $\Delta$, the maximum degree of the graph. We also provide $\Omega(m\Delta)$ lower bound on the number of rounds for the D-2-D problem.
\end{abstract}

\keywords{Mobile robots, Anonymous graphs, Dispersion, Deterministic algorithm.}

\section{Introduction}\label{sec:intro}
The aim of the dispersion problem is to place a set of $k(\leq n)$ mobile robots in the nodes of an unknown graph consisting of $n$ nodes such that in the final configuration each node contains at most one robot, starting from any arbitrary initial configuration of the robots on the graph. This problem was introduced in the year 2018 by Augustine et al.\cite{Augustine18}. Later, this problem is studied under various models and with different assumptions in the literature\cite{DasBS21,Molla0M21,MollaM19,MollaMM21,ShintakuSKM20,KshemkalyaniMS22,Barun,Ani,kshemkalyaniMS20,AgarwallaAMKS18}. The main tool used for dispersion is Depth-First-Search traversal and since the robots are equipped with memory, they store the important information required to complete dispersion without getting stuck in a cycle. A natural question arises what will happen if there are some extra constraints imposed on the dispersion problem? As an example, no robot can settle in the one-hop neighborhood of an already settled robot. This led to the generation of Distance-2-Dispersion problem. In this problem, $k$ robots arbitrarily placed on the graph need to attain a configuration such that no two adjacent nodes are occupied by the settled robots. Also, an unsettled robot can settle at node that already contains a settled robot, only if for the unsettled robot there is no other node to settle maintaining the added constraint. With this, there can be $many$ nodes without a settled robot, i.e., no robot to store any information at those nodes, and the graph is a zero storage one, thus the problem becomes interesting if one aims to solve with $less$ memory requirement at each robot.\\
\subsection{Model and the problem} %\label{sec:problem} 
Let $G$ be an arbitrary connected undirected graph with $n$ nodes, $m$ edges and maximum degree $\Delta$. The nodes are anonymous, i.e., they have no id. It is a port labelled graph, i.e., each node $u$ assigns a distinct port number to each of its edges from a range $[0,\delta(u)-1]$ where $\delta(u)$ is degree of node $u$. Port numbers that are assigned at the two ends of any edge are independent of each other. Nodes do not have any memory and hence $G$ is a zero storage graph.  

A total of $k$ movable entities are present in the system, which are called robots. Each robot has a unique id from the range $[1,L]$ and each robot knows its id. In some round, if two or more robots are at a single node, we call them co-located and such robots can share information via message passing. \footnote{This is known as the Face-to-Face communication model and has already been considered while studying problems related to mobile robots including exploration~\cite{ShantanuDas19,DereniowskiDKPU15} and dispersion~\cite{Augustine18,KshemkalyaniS21}}Any robot present on some node knows the degree of that node as well as the port-numbers associated with each of the edges corresponding to that node. So, if some robot needs to leave its current node through any particular port number, it can do that. Besides this, whenever any robot moves from a node $u$ to another node $v$, it learns the port number through which it enters the node $v$. 

Our algorithm proceeds in synchronous rounds where in each round robots perform the following steps in order: (i) co-located robots may exchange messages (ii) robots may compute based on available information (iii) robots may move through an edge to some adjacent node from the current node based on its computation in step (ii). We further assume that all the robots start the algorithm at the same time, i.e., from the same round. The time complexity of the algorithm is measured as the number of synchronous rounds required by the robots to complete the task. 
We also study the amount of memory required per robot to run the algorithm.\\\
\
\noindent{\textbf{The problem: Distance-2-Dispersion(D-2-D):}} Given a set of $k\geq 1$ robots placed arbitrarily in a port labelled graph $G$ with $n$ nodes and $m$ edges, the robots need to achieve a configuration by the end of the algorithm where each robot needs to settle at some node satisfying the following two conditions: (i)  no two adjacent nodes can be occupied by settled robots, and (ii) a robot can settle in a node where there is already a settled robot only if no more unoccupied node is present for the robot to settle satisfying condition (i)

The conditions ensure that the distance between any pair of settled robots is at least 2 unless both are settled at the same node. Hence, the nodes with settled robots form an independent set of the graph. And with $enough$ robots, we get a maximal independent set.\\ 
\
\noindent{\textbf{Our contribution:}}
We solve the D-2-D problem for rooted\footnote{The configuration where all the robots are initially placed on a single node of the graph} initial configuration on arbitrary graphs in $2\Delta(8m-3n+3)$ rounds using $O(log \Delta)$ memory per robot in Section \ref{sec:mainalgo}. All the settled robots terminate even without any global knowledge regarding any of the graph parameters $m$, $n$ or $\Delta$.
In Section \ref{sec:lowerbound}, we provide a $\Omega(m\Delta)$ lower bound of the D-2-D problem on the number of rounds considering robots do not have more than $O(log \Delta)$ memory.  Also, if $k\geq n$, it is guaranteed that the nodes with settled robots form a maximal independent set, which can itself be an interesting topic to study in the domain of distributed computing with mobile robots.\\ 
\subsection{Related work} 
Dispersion is the most related problem to our problem as we consider similar model that is considered to solve the dispersion problem. Augustine et al introduced the dispersion problem 
in \cite{Augustine18} for the rooted configuration. They proved the memory requirement by the robots for any deterministic algorithm to achieve dispersion on a graph is $\Omega(\log n)$. The lower bound for any algorithm to perform dispersion on any graph is $\Omega(D)$, where $D$ is the diameter of the graph. 
For rooted graphs, with the knowledge of $m$, $n$, they gave an algorithm that requires $O(\log n)$ memory by the robots to complete dispersion in $O(m)$ rounds\cite{Augustine18}. Kshemkalyani et al.\cite{KshemkalyaniA19} improved the lower bound of running time to $\Omega(k)$ where $k\leq n$. They developed an algorithm for synchronous system which solves dispersion in $O(m)$ rounds using $O(k\log\Delta)$ bits at each robot. However, for an asynchronous system they developed an algorithm which requires $O(max(\log k,\log \Delta))$ bits of memory with the time complexity $O((m-n)k)$. Later Kshemkalyani et al.\cite{KshemkalyaniMS19} significantly improved the result and provided a deterministic algorithm for dispersion in arbitrary graphs in synchronous setting that runs in $O(min(m,k\Delta)\cdot \log l)$ rounds, where $l\leq \frac{k}{2}$, using $O(\log n)$ bits of memory at each robot. Their intuitive idea was to run DFS traversals in parallel to minimize time. The robots required the knowledge of $m$, $n$, $k$ and $\Delta$. Shintaku et al. then presented a dispersion algorithm that does not require such global knowledge \cite{ShintakuSKM20}. Their algorithm solves the dispersion problem on arbitrary graphs in $O(min(m,k\Delta)\cdot \log l)$ rounds using $\Theta(\log(k+\Delta))$ bits of memory at each robot.
Recently, Kshemkalyani et al.\cite{KshemkalyaniS21} provided an algorithm that is optimal in both time and memory in arbitrary anonymous graphs of constant degree. They presented an algorithm which solves dispersion in $O(min(m,k\Delta))$ time with $\Theta(\log(k+\Delta))$ bits at each robot improving the time bound of the best previously known algorithm by $O(\log l)$ where $l\leq \frac{k}{2}$ and matching asymptotically the single-source DFS traversal bounds\cite{KshemkalyaniMS19}. They extend the idea of  \cite{KshemkalyaniMS19} by making the larger size DFS traversal to subsume the smaller size DFS thus avoiding the need of revisiting the nodes of subsumed traversal more than once.

D-2-D, in some sense, is also related to the problem of scattering or uniform distribution. Scattering has been worked mainly for grids\cite{BarriereFBS09} and rings\cite{ElorB11,ShibataMOKM16} though with anonymous robots. Finally, as in some cases, our algorithm forms a maximal independent set, we cite the following study on forming maximal independent set with movable entities, though it is done with stronger model assumptions. Vamshi et al. presented the problem of finding maximal independent set(MIS) using myopic luminous robots\cite{sai} of an arbitrary connected graph where the robots have prior knowledge of $\Delta$, $O(\log \Delta)$ bits of persistent memory and at least 3 hops visibility. Authors also used colors to represent different states and worked under semi-synchronous as well as asynchronous schedulers.\\
\subsection{D-2-D vs dispersion: the challenges}
In the previous works on the dispersion problem, the algorithms use the depth-first search (DFS) traversal with limited memory of the robots \cite{Augustine18,DasBS21,KshemkalyaniS21}. The key idea to achieve dispersion from any rooted configuration is the following. At the starting node, i.e., the root, the robot with the lowest id settles down and the remaining unsettled robots leave the root to visit one of its neighbors. The minimum id robot from this group of unsettled robots settles here. In any round, whenever an unoccupied node is visited by the group of robots for the first time, the minimum id robot settles there. A settled robot on a node represents that the node has already been visited by the group. Using this key idea, DFS traversal for anonymous graphs is feasible by robots since settled nodes help to track cycles and already visited nodes thus ensuring the safe dispersion of the anonymous graph by the robots. For instance, if the group of robots visit a node $v$ from a node $u$ during exploration, and find out that a settled robot is already present at $v$ then as per DFS the group must backtrack to $u$ and explore other unexplored neighbors of $u$ if any. Whenever it finds an empty node, the smallest id robot of the group settles down. The remaining unsettled robots continue the DFS traversal.
However, to do the backtracking successfully, the robots need to remember the path they have already explored and this needs high memory requirement. To avoid this, the algorithm instructs each settled robot to store the information required to backtrack from the respective node where the robot is settled. Observe that, the group never backtracks from a node where no robot is settled. To be more specific, each settled robot remembers the parent pointer as the port number it used while entering into the node it is settled at. The group of backtracking robots can use this parent pointer and can successfully backtrack from this node. Thus dispersion can be achieved with $\textbf{O}(\log \Delta)$ memory per robot. 

In D-2-D, the main motive is to achieve dispersion such that no two adjacent nodes have settled robots. In other words, the distance between any two settled robots must be at least two. To do so, we may face the following challenges.
 \begin{itemize} 
 \item Since there is no settled robot present in the one-hop neighborhood of any settled robot, the information regarding the parent pointer of those neighboring nodes is difficult to be stored and subsequently used while backtracking. This may lead to high memory requirements. 
         
         \item The robots can settle at a node $u$ if and only if there are no settled robots at any of the neighbors of $u$. As the maximum degree $\Delta$ can be large, this may lead to high time complexity.
\end{itemize}

\section{Warm-Up: D-2-D with $O(\Delta\log\Delta)$ Memory per Robot}\label{sec:warmup}
In this section, we provide an informal discussion on a straightforward solution of the rooted D-2-D without bothering about the memory requirement per robot or the time complexity. Our algorithm is based on the depth-first search traversal (albeit with some modification) that solves the dispersion problem as we discussed above in Section \ref{sec:intro}. Later in Section \ref{sec:mainalgo}, we improve over this solution. 

While solving the dispersion problem, encountering a settled robot while doing forward exploration, implies the presence of cycle to the moving group of unsettled robots and then backtracking is done with the help of stored parent pointers at the settled robots. We do the following modification to solve our problem. As the group may need to backtrack from an unoccupied node in our D-2-D problem, it is required to store the parent pointers of the unoccupied nodes too. Note that all the neighboring nodes of any occupied node must be unoccupied. We instruct each settled robot to remember the parent ports of all its neighbors including itself. Basically each settled robots work as $virtually settled$ robot at its neighbors. The notion of virtually settled means that although there is no settled robot present at that node, yet no robot from the visiting group can occupy it. However, to store as well as to provide the stored parent pointers, each settled robot must meet the moving group of robots whenever the group reaches one of its neighboring nodes.

To achieve this, each settle robot does a back and forth movement from its position to its neighboring nodes. To be more specific, let a robot $r$ settles in round $T$ at a node $u$ of degree $\delta$. It visits $u(0)$ in round $T+1$, comes back to $u$ in round $T+2$, visits $u(1)$ in round $T+3$, and so on. It visits $u(0)$ again after $u(\delta-1)$ is visited.  The settled robot $r$ stops only when it meets the group of unsettled robots at some node, say $v=u(p)$, and at some round say $T'$. It stays with the group at $u(p)$ till the group leaves $u(p)$, say in round $T''$. Then $r$ comes back to $u$ in round $T''+1$ and again starts visiting its neighbors one by one as described earlier. 

Note that only this does not solve the problem as the group of moving robots may reach a neighboring node of some occupied node but at that time the respective settled robot may visit its another neighbor. To solve this issue, the algorithm instructs the moving group to wait for $2\Delta$ rounds at each node $v$ which ensures that the moving settled robot must meet the group within this time period. For simplicity, let us assume that all robots know $\Delta$.\footnote{In the next section we  remove this assumption in our main algorithm using the idea that whenever the group of unsettled robots visits a new node, each unsettled robot updates the value of maximum degree seen by itself till now, as the value of $\Delta$.} If $v$ is occupied, then the settled robot must meet the group within 2 rounds; else if $v$ is a neighbor of an occupied node, then the settled robot that is working as a virtually settled robot must meet the group within the $2\Delta$ waiting time. 

Now we provide the algorithm. If the group is in forward exploration phase, the following are the possibilities.
\begin{itemize}
\item The group meets at least one virtually settled robot and finds that none of the virtually settled robots who meet the group has the parent pointer for this node, the group understands that it is visiting this node for the first time and continues the DFS traversal in the forward exploration phase after providing the parent pointer to each of the virtually settled robots. This is possible as all the settled robots that meet this group wait with this group till the group leaves. 
\item The group meets at least one virtually settled robot and finds that at least one of the virtually settled robots comes with the parent pointer for this node, the group understands that this node is already explored earlier, and subsequently backtracks. 
\item The group finds that the node is occupied, it goes to $backtrack$ phase, and backtracks with the help of the parent pointer stored at the settled robot.
\item The group sees no settled or virtually settled robots, then the minimum id robot from the group settles there.
\end{itemize}
If the group is in backtracking phase, it must meet at least one virtually settled robot or a settled robot at the node. The group checks if all the ports associated to this node is already explored or not, by looking at the parent pointer and the port though which it just backtracks to this node. The following are the decisions. 
\begin{itemize}
\item  If all ports are explored, it continues the backtracking.
\item If some port remains to be explored, it changes to forward exploration phase. 
\end{itemize}
Figure \ref{fig:fig} represents the implementation of the algorithm on an example.

\begin{figure}[ht!]
\begin{subfigure}{.5\textwidth}
  \centering
  % include first image
  \includegraphics[width=.6\linewidth]{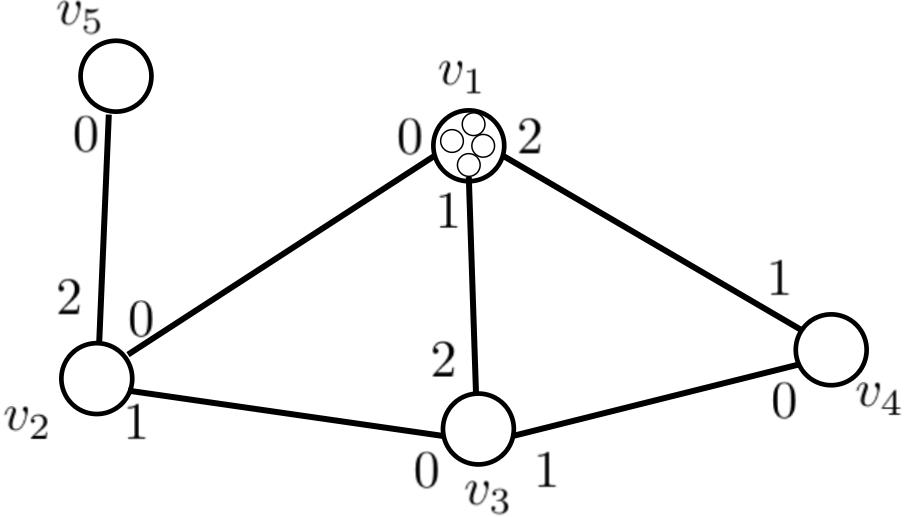}  
  \caption{}
  \label{fig:sub-first}
\end{subfigure}
\begin{subfigure}{.5\textwidth}
  \centering
  % include second image
  \includegraphics[width=.6\linewidth]{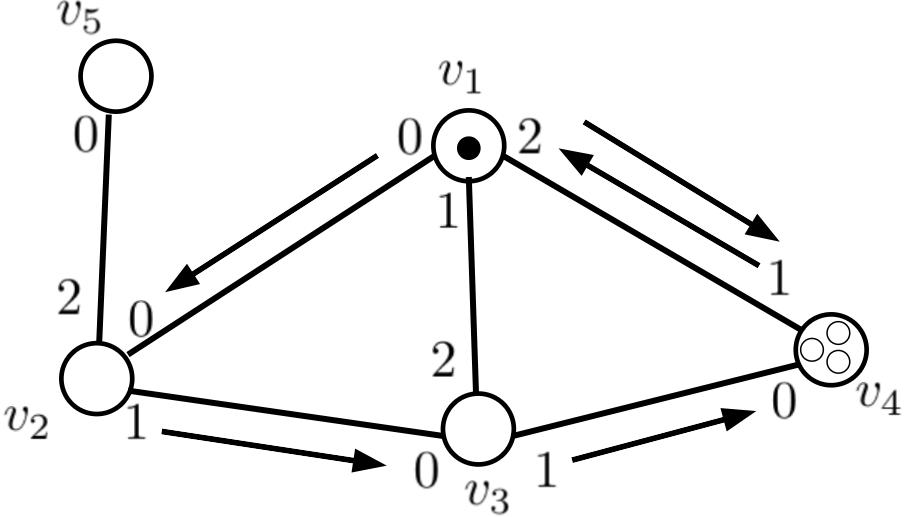}  
  \caption{}
  \label{fig:sub-second}
\end{subfigure}

\begin{subfigure}{.5\textwidth}
  \centering
  % include third image
  \includegraphics[width=.6\linewidth]{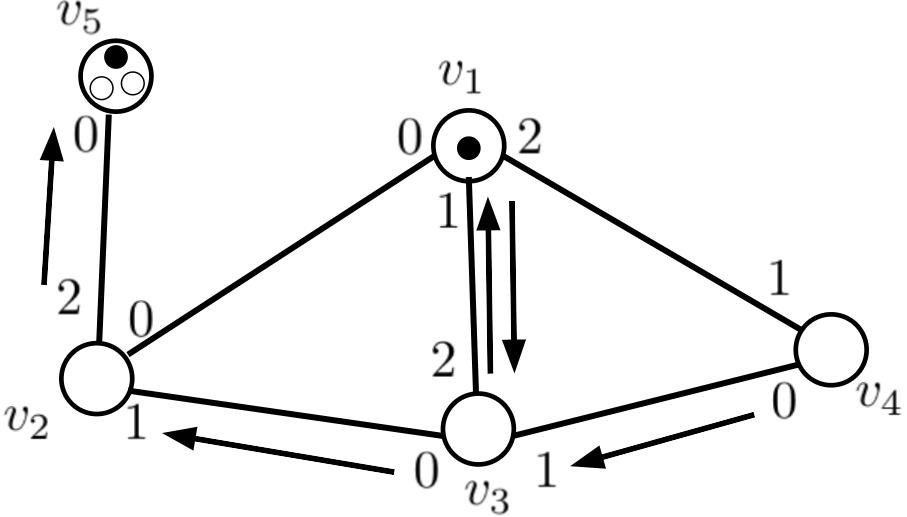}  
  \caption{}
  \label{fig:sub-third}
\end{subfigure}
\begin{subfigure}{.5\textwidth}
  \centering
  % include fourth image
  \includegraphics[width=.6\linewidth]{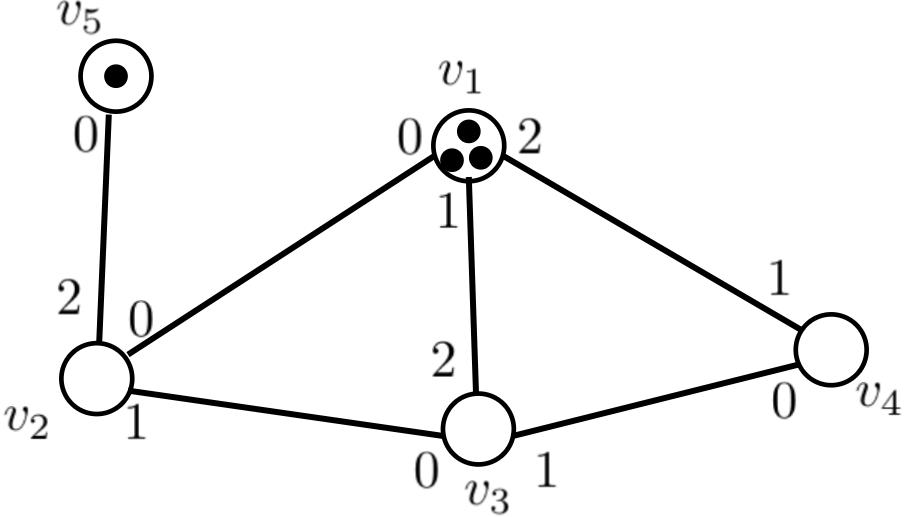}  
  \caption{}
  \label{fig:sub-fourth}
\end{subfigure}
\caption{(a) The initial configuration with four robots at $v_1$.
(b) One robot settles at $v_1$ while the remaining unsettled robots do the DFS traversal through $v_2$, $v_3$, $v_4$. None of them settles at any of these nodes as the robot settled at $v_1$ works as the virtually settled robot at these nodes. The settled robot at $v_1$ maintains the parent pointer for the group at all its neighboring nodes. From $v_1$ the robots backtrack due to the presence of a settled robot on that node. (c) The unsettled robots continue the DFS traversal and when the group reaches $v_5$, a robot settles there. (d) The remaining robots continue the traversal, finish the traversal at $v_1$, then do it again and finally achieve the desired configuration.
}
\label{fig:fig}
\end{figure}

To store the parent pointer for each of the neighbors, each settled robot requires $O(\Delta\log \Delta)$ memory i.e., $\log \Delta$ memory per port number. This leads to high memory requirements and waiting for $2\Delta$ rounds for the moving group at each newly explored node leads to a high run time of the algorithm. Both of these issues are mentioned when we discussed about the challenges in Section \ref{sec:intro}. The run time of this algorithm becomes $2\Delta(4m-2n+2)$. This is because we just run the DFS traversal for dispersion that takes $4m-2n+2$ time as done in \cite{Augustine18,KshemkalyaniMS19} assuming all the unoccupied neighbors of each occupied node are virtually occupied. Also, we keep all the necessary information with the settled robots. After completing the DFS traversal once, if some robots are left unsettled then they settles at the root node. Note that as the settled robots keep moving, it is desirable that the robots can terminate. However, here we do not discuss about the termination of the algorithm, but it can be achieved in a similar way we do in Section \ref{sec:mainalgo} in our main algorithm.

\section{D-2-D with Termination: $O(\log \Delta)$ Memory per Robot}\label{sec:mainalgo}
In this section, we present an algorithm that improves the memory requirement of the algorithm discussed in section \ref{sec:warmup} as well as we include termination of the robots without any global knowledge. First we provide a high level idea. The idea is not to do the usual DFS traversal and in turn, no settled robot needs to maintain parent pointers for all its unoccupied neighbors. Each settled robot stores the parent pointer corresponding to only one of its neighboring nodes along with the parent pointer corresponding to the node where it is settled. This bounds the path from the last settled robot to the node where the unsettled group is currently present. If the length of this path is high, then in the worst case, all the nodes in this path may remain unoccupied as all of them can be the neighbor of some particular occupied node $v$, and in this case the robot that is settled at $v$, needs to remember all the parent pointer of the unoccupied nodes present in that path. We restrict the depth of the traversal during forward exploration by 2 from the node where a robot settled last. Now, the question is how does the information corresponding to only one neighbor suffice? Similar to the previous algorithm, each settled robot visits each of its neighbors one by one in the subsequent rounds once it is settled while the group of unsettled robots waits for $2\phi$ rounds, where $\phi$ is the maximum degree observed by the group of unsettled robots till reaching the current node, in order to check the presence of any settled robot in its one-hop neighborhood. Similar to the algorithm described in the Section \ref{sec:warmup}, the group of unsettled robots update the value of $\phi$ when they reach a new node $u$ by $max\{\phi, \delta_u\}$ as they have no prior knowledge of the maximum degree of the graph. The group of unsettled robots maintains a counter $r_i.dist$ from the last encountered settled robot which stores the distance from the settled robot. The variable $r_i.dist$ can take values 0, 1 or 2. The group of unsettled robots decides to do forward exploration or backtrack based on the value of $r_i.dist$, the number of settled robots visiting this group during the waiting period and the $id$s of these visiting settled robots. Thus, the state of any robot can be $explore$ or $backtrack$. The robots maintain another variable $r_i.stage$ which takes the value $2$ when the D-2-D configuration is achieved and the termination stage has started. In order to terminate the algorithm, the settled robots maintain variables $r_i.count$, $r_i.count'$, and $r_i.act\_settled$ whose description is given in the table \ref{table:symbols}. 

Whenever the group of unsettled robots moves from a node $u$ containing a settled robot $r_j$ with the lowest id to one of its neighbors, say $u(p)$ i.e. the unsettled group of robots have $r_i.dist=1$ at $u(p)$, the state is $explore$ and no other settled robot visits the group at $u(p)$ except $r_j$, then the parent pointer of $u(p)$ is required to be stored by the settled robot $r_j$ ($r_j.virtualparent$ as defined in Table \ref{table:symbols})  and the group explores further. From the settled robot's point of view, it stores its parent pointer ($r_j.parent$ as defined in Table \ref{table:symbols}) along with the parent pointer of only one neighbor $u(p)$. The idea is $r_j$ may store the parent pointer corresponding to another neighbor $u(q)$, say, when q is the last port that was explored from the node where $r_j$ is settled.
Else, if one or more settled robots with id lower than $r_j.id$ meet the group at $u(p)$, then the group backtracks to the node $u$ along with the settled robot $r_j$ and $r_j$ does not need to store the parent pointer of $u(p)$ in this case. All the settled robots including $r_j$ which meet the group of unsettled robots will increment their value of $count$ variable each time. Note that, to avoid needless increments in the value of the $count$ variable by the settled robots, when the settled robots encounter the group of unsettled robots at a node during their back and forth movement, they wait with the group until it leaves that node.

When $r_i.dist=2$ as the group reaches a node say $v$ after exiting $u(p)$ where $r_j$ is settled, a robot settles there only if there are no neighboring settled robots. In this case, the newly settled robot keeps the parent pointer as the group moves forward. Else, if at the node $v$, one or more settled robots are visiting, then the group backtracks as there are options to explore this node later through some other already settled robots with possibly lower id than $r_i$. For instance, if the unsettled group of robots at $dist=2$ from the node $u$ are witnessing their waiting period of $2\phi$ rounds and during this waiting period, a robot $r_m$ visits. The group not only understands that no robot can settle at this node, but also, this node can be explored from the node that has the settled robot $r_m$. In other words, if the group of unsettled robots are at $dist=2$ from a node with a settled robot and are in the $explore$ state, and some settled robot $r_m$ visits the group during the waiting period then this implies a direct link to this node from $r_m$. This means the current node can be explored from the node where $r_m$ is settled. So, in no circumstances, any settled robot, here $r_j$, is keeping parent pointers of a node that is at least two hops away. All the settled robots $r_j$ which meet the group of unsettled robots increment the value of $r_j.count$ each time they meet the group of unsettled robots. These set of steps comprise the stage 1 of the algorithm.
After the last robot settles, it begins the next stage, namely, the termination stage. The last settled robot starts from the root node and follows similar path as described in the above paragraph. In this termination stage, the last robot $r_L$ acts as the group of unsettled robots in the stage 1 while the robot $r_j$ which originally settled during the Distance-2-Dispersion, set $r_j.act\_settled=1$ in the same order as it settled in the previous stage. Thus, the termination stage is a replica of stage 1. Each time the $act\_settled$ robot $r_j$ meets the last settled robot $r_L$ during its traversal, $r_j$ increments the value of $r_j.count'$. As and when the value of $r_j.count$ becomes equal to $r_j.count'$, the robot $r_j$ terminates. In this way the robots terminate and Distance-2-Dispersion with termination is achieved.

\begin{center}
\begin{table}[ht!]
\begin{tabular}{|l|l|} 
  \hline
  Variables & Descriptions \\ 
  \hline
 $r_i.parent$ & This variable contains the parent port of the node $u$ \\ & where $r_i$ is settled in stage 1 or $act\_settled$ in stage 2.\\ & Else, $r_i.parent=-1$. \\  \hline
  $r_i.portentered$ & The port through which robot $r_i$ enters the current\\ & node. Initially $r_i.portentered=-1$ for all the robots.\\
  \hline
  $r_i.virtualparent$ & The parent port of $u(p)$ where $p$ be the last port that\\ & was explored from the node where $r_i$ is settled.\\ & Initially $r_i.virtualparent=-1$  for all the robots.\\ \hline
   $r_i.dist$ & Each unsettled robot $r_i$ maintains the distance from\\ & the settled robot it last encountered during its \\& traversal. According to our algorithm, $r_i.dist\leq 2$. For \\& each settled robot, $r_i.dist= 0$ if it is at the node \\& where it is settled, else $r_i.dist= 1$.  \\ 
   \hline
    $r_i.special$ & A robot $r_i$ settled at some node $u$, say, updates\\& $r_i.special=1$ only when the group of unsettled \\& robots is at $u$ with $r_i$ and will move through one \\& of the adjacent edges of $u$ in the $explore$ state.  \\& For other settled robots, $r_i.special=0$ and for any \\& unsettled robot, $r_i.special=-1$.\\ 
    \hline
    
    $r_i.settled$ & Takes value 1 if $r_i$ is a settled robot in stage 1, else\\& takes 0.  \\ 
    \hline
    $r_i.count$ & If $r_i$ is a settled robot in stage 1, this variable\\& counts the number of times $r_i$ meets the group \\& of unsettled robots. \\
    \hline
    $r_i.stage$ & This variable can take values $1$ or $2$ where\\& $r_i.stage=1$ indicates stage 1 of the algorithm\\&  whereas $r_i.stage=2$ indicates the stage 2.\\
    \hline
    $r_i.act\_settled$ & Takes value $1$ if $r_i$ settles in the stage 2, else\\& takes value $0$.\\
    \hline
    $r_i.count'$ & In the stage 2, if $r_i$ is an $act\_settled$ robot, this \\& variable counts the number of times $r_i$ meets\\&  the robot $r_L$ with $r_L.terminate=1$.\\
    \hline
\end{tabular}
\caption{Description of variables}\label{table:symbols}
\end{table}
\end{center}

\subsection{The Algorithm}\label{sec:3.1}
In this section, we describe our algorithm that can be run by the robots with $O(log \Delta)$ memory per robot to achieve rooted D-2-D and terminate in $2\Delta(8m-3n+3)$ rounds.

Consider an arbitrary graph $G$ and let $k$ robots be initially placed on a single node, say $u$, of $G$. Each robot $r_i$ maintains variables $r_i.parent$, $r_i.settled$, $r_i.dist$, $r_i.special$, $r_i.portentered$, $r_i.virtualparent$, $r_i.count$, $r_i.stage$, $r_i.act\_settled$ and $r_i.count'$. We have defined the variables in detail in Table \ref{table:symbols}. Apart from that, each unsettled robot maintains its state, which can be either $explore$ or $backtrack$. In $explore$ state, it does forward exploration by moving through a computed port number from the current node, whereas in $backtrack$ state, it learns the parent pointer from a settled robot and backtracks through that port.
Robots also maintain a variable $\phi$ which is initialized to $\delta(u)$, where $u$ is the root. Each unsettle robot updates $\phi$ as and when they see some node with degree more than the current value of $\phi$. However, settled robots do not modify $\phi$ once they are settled. Note that the each robot in the group of unsettled robots has the same value of $\phi$; also $\phi\leq\Delta$, the maximum degree of the graph. More specifically, when the group of unsettled robots reach a new node, say $v$, each robot in the group updates the value of $\phi$ with $max\{\delta(v), \phi\}$. Our algorithm works in two stages, stage 1 and stage 2. In stage 1, robots achieve D-2-D and in stage 2, robots terminate. The group of unsettled robots run the algorithm in phases, where each phase consists of $2\phi$ rounds. Since each unsettled robot agrees on their $\phi$ value, the group starts and ends each phase at the same round.

Our algorithm starts by settling the minimum id unsettled robot, say $r_1$, at the root node $u$ at the end of phase 1. The settled robot $r_1$ updates  $r_1.parent \leftarrow -1$, $r_1.settled\leftarrow 1$ and $r_1.dist\leftarrow 0$. The remaining unsettled robots update their $state$ to $explore$ and since they are present with the settled robot $r_1$, the variable $r_1.special$ is updated to $1$. The unsettled robots update the port number as $r_i.portentered=(r_i.portentered+1) mod \delta$, where $\delta$ is the degree of the node $u$. The unsettled robots move along the incremented port number, which in this case will be port 0, to the neighboring node $u(0)$, and update $r_i.dist=1$. With this, the phase 1 ends for the unsettled robots. After reaching this node, the unsettled robots update $\phi$ and wait for $2\phi$ rounds. As this is a rooted configuration and $r_1$ is the first robot to settle(which is currently present along with the group of unsettled robots), no other settled robot visits this group during the wait of $2\phi$ rounds. This implies that the current node can be further explored if there are unexplored edges, i.e., degree of $u(0)$ is more than one.  The settled robot $r_1$ updates $r_1.count \leftarrow 1$. Let $p_1$ be the port through which robots entered $u(0)$. The settled robot $r_1$ updates $r_1.virtualparent\leftarrow p_1$, return through port $p_1$ to the root $u$ and updates $r_1.special=0$.
The unsettled robots update $r_i.portentered=(p_1+1)mod (\delta(u(0))$), where $\delta(u(0))$ is the degree of node $u(0)$. The unsettled robots move through the updated value of $portentered$ in the $explore$ mode and update $r_i.dist=2$ if $r_i.portentered\neq r_1.virtualparent$. Else, the unsettled robots move through the updated value of $portentered$ in the $backtrack$ mode and update $r_i.dist=0$. This ends phase 2 for the group of unsettled robots. Basically, the group of unsettled robots move only at the end of each phase.

After the group of unsettled robots reaches a new node at distance 2 apart from $u$, they update $\phi$ and wait for $2\phi$ rounds while the settled robot $r_1$ now has $r_1.special=0$, and it continuously traverses through each of the ports of $u$ one by one. During this waiting period, if any settled robot $r_j$ visits the group of unsettled robots, they backtrack to the previous node using the $r_i.portentered$. However, the settled robot $r_j$ which meets this group of unsettled robots, waits with the group of unsettled robots unless the group leaves that node and increments the value of $r_j.count$. However, if no settled robot visits this group then the minimum id robot from the group of unsettled robot settles at this node. 

So each phase of unsettled robots corresponds to one edge traversal and each phase requires $2\phi$ rounds. Settled robots do not bother with phases, they either wait with the group of unsettled robots or continue back and forth traversal. So, the time complexity of stage 1 of our algorithm depends on how many phases the group of unsettled robots works before the last robot, i.e., the largest id robot, say $r_L$, settles. The settled and unsettled robots decide what to do based on the following cases. First we write what settled robots do.

\begin{itemize}
    \item If a settled robot $r_j$ has the value $special=0$, then it continues its visit to each of its neighbors one by one and keeps modifying $r_i.dist$ accordingly. As and when the settled robot meets the group of unsettled robots, it increments the value of $r_j.count$ and waits with the group unless it leaves that node. Post that, the settled robot resumes its visit to each of its neighbors one by one.
    \item If a settled robot $r_j$ has the value $special=1$, then it moves along with the group of unsettled robots in the next round to the neighboring node and waits along with the group of unsettled robots. The settled robot $r_j$ increments the value of $r_j.count$. During this waiting period, if any settled robot with id lower than $r_j.id$ visits then $r_j$ does not store the parent pointer for this node and return back to its original position after the group leaves. However, if no settled robot with id smaller than $r_j.id$ visits the group of unsettled robots then the settled robot $r_j$ updates $r_j.virtualparent=r_j.portentered$.
\end{itemize}
The pseudo-code for the settled robots is given in Algorithm \ref{alg:settle}.
%%%%%%%%%%%%%%%%%%%%%%%%%%%%%%%%%%%%%%%%%%%%%%%
\begin{algorithm}[ht!]
%\SetAlgoLined
\caption{Algorithm for each settled robot $r_j$}\label{alg:settle}
\If{$r_j.id$ is the largest among the group from which it settled}
    {
    set $r_j.terminate=1$
    }
\Else
{
\If{$r_j.special=0$}{
do the back and forth movement through all its ports and updates $r_j.dist$ value to $0$ or $1$ according to its position. \\
\If {the settled robot $r_j$ meets the group of unsettled robots}{
{wait with the group of unsettled robots till the group leaves that node}\\
{increment the value of $r_j.count$}\\
{move to the original position where it was settled, if not already there, and resume the back and forth movement}}
\ElseIf{$r_j$ meets any robot $r_i$ with $r_i.stage = 2$}
{
{move to its original position where it was settled, if not already there, and wait until it meets the robot $r_L$ with $r_L.terminate=1$}\\
    {do not increment the value of $r_j.count$}\\
    {set $r_j.stage=2$}\\
    \If{the robot $r_L$ with $r_L.terminate=1$ visits this node where $r_j$ is waiting}
    {
    set $r_j.act\_settled=1$, $r_j.parent=r_L.portentered$, $r_j.special=1$
    }
    }  

}
\If {$r_j.special=1$}{
{move along with the group of unsettled robots through updated value of $r_i.portentered$ and wait till the group leaves.}\\
{increment the value of $r_j.count$}\\
\If{no settled robot visits with id lower than $r_j.id$}{
{set $r_j.virtualparent=r_i.portentered$}\\
{move to the original position via $r_j.portentered$}\\
}
\Else{
{move to the original position via $r_j.portentered$}
}
{update $r_j.special=0$}
}
}
\end{algorithm}

Now we see how the unsettled robots work. When the group of unsettled robots reach a node at the end of some phase, then the decision of what to do at the end of the next phase is made based on the following cases:

\begin{itemize}
    \item When $r_i.dist=1$ and $state=explore$ at the beginning of a phase
    \begin{itemize}
        \item If no settled robot with $special=0$ visits the group of unsettled robots such that the visiting settled robot's id is lesser than the id of $r_j$ with $r_j.special=1$, then the group updates $r_i.portentered=(r_i.portentered+1) mod \delta$. Note that whenever the group of unsettled robots have $r_i.dist=1$ and $state=explore$, there is definitely a settled robot $r_j$ present with the group having $r_j.special=1$. If $r_i.portentered=r_j.virtualparent$, then the group of unsettled robots backtracks to the previous node. Else if $r_i.portentered\neq r_j.virtualparent$ then the group of unsettled robots leaves the current node via the updated port number in the $explore$ state. 
        
        \item If at least one settled robot, with $special=0$ and having id lesser than id of $r_j$ with $r_j.special=1$, visits the group of unsettled robots, then the group backtracks via $r_i.portentered$. 
        
        %\item If at least one settled robot with $special=0$ visit the group of unsettled robots but all these settled robots have a larger id than the settled robot $r_j$ with $r_j.special=1$, then the group of unsettled robots moves for further exploration i.e. their state remains $explore$ and they move to the next node through $(r_i.portentered+1)mod \delta$.
            \end{itemize}
    \item When $r_i.dist=2$ and $state=explore$ at the beginning of a phase 
    \begin{itemize}
        \item If no settled robot visits the group of unsettled robots in $2\phi$ rounds, then the lowest id robot from the group settles on this node. 
        \item If one or more than one settled robot visits the group of unsettled robots, then the group backtracks via the $port\_entered$. 
                %The visit of the settled robot in this case indicates the presence of a direct link between \textcolor{red}{those nodes with settled robots and the current node can be explored from those nodes.}
        
    \end{itemize}
\end{itemize}

However, when the group of unsettled robots is in $state=backtrack$, the group of unsettled robots reaches a node which has been visited earlier. Thus, the node which is visited in $backtrack$ state has either a settled robot on it or any settled robot in its one hop neighbor has stored the virtual parent for it. After backtracking to the current node, the group of unsettled robots decides to further explore or backtrack based on the following cases.
\begin{itemize}
\item When $r_i.state=backtrack$ at the beginning of a phase
    \begin{itemize}
         \item If all the ports are already explored i.e. $(r_i.portentered+1)mod \delta$ is equal to the $r_m.parentpointer$ where $r_m$ is the settled robot on that node or $r_j.virtualparent$ where $r_j$ is the settled robot which has stored the parent pointer for its neighbor. Recall that this information can be exchanged during the waiting time. In this case, the group of unsettled robots backtracks through $r_m.parentpointer$ or $r_j.virtualparent$ of the node.
         \item If the $(r_i.portentered+1)mod \delta$ is not equal to the $r_i.parentpointer$ of the node then the unsettled group of robots change their $state$ to $explore$ and move through $r_i.portentered$.
    \end{itemize}
\end{itemize}

The pseudo-code for the unsettled robots is given in Algorithm \ref{alg:cap}.
%%%%%%%%%%%%%%%%%%%%%%%%%%%%%%%%%%%%%%%%%%%%%%%%%%%%%%%
\begin{algorithm}
\caption{Algorithm for each unsettled robot $r_i$}\label{alg:cap}
%\small
{initialise $r_i.portentered=-1$, $r_i.dist=0$, $r_i.state=explore$, $r_i.special=-1$, $r_i.settled=0$, $r_i.virtualparent=-1, \phi=0$, $r_i.terminate=0$, $r_i.stage=1$}\\
\For{$phase=0$}
{
  {the minimum id robot $r_j$ settles on the node after waiting for $2\phi$ rounds}\\
{set $r_j.parent=-1$}\\
{set $\phi= \delta(u)$}\\
{$r_i.portentered=(r_i.portentered+1)$mod$\delta$}\\
{move through $r_i.portentered$}\\
{$r_i.dist$=$r_i.dist+1$}\\
}
\For{$phase > 0$}
{
\If{$r_i.dist=0$}
{
    {$r_i.dist = r_i.dist +1$}   
}
\If{$r_i.dist=1$ and $r_i.state=explore$}{
{wait for $2\phi$ rounds}\\
    \If{any settled robot visits with $r_j.special=0$ and $r_j.id$ is smaller than the settled robot already present with the group}{
    {$r_i.state=backtrack$}\\
    {update $\phi$ = $max\{\phi , \delta(u)\}$}\\
    {move through $r_i.portentered$}\\
    {decrement $r_i.dist$ value to 0}\\
    }
    \Else{
    {$r_i.state=explore$}\\
    {update $\phi$ = $max\{\phi , \delta(u)\}$}\\
    {$r_i.portentered=(r_i.portentered+1)$mod$\delta$}\\
        \If{$r_i.portentered=r_j.virtualparent$}{
        {$r_i.state=backtrack$}\\
        {move through $r_i.portentered$}
        }
        \Else{
         {move through $r_i.portentered$}\\
         {increment $r_i.dist$ to 2}\\
        }
   
    }
  }
\If{$r_i.dist=2$ and $r_i.state=explore$}{
{wait for $2\phi$ rounds}\\
{update $\phi$ = $max\{\phi , \delta(u)\}$}\\
    \If{no settled robot visit the group of unsettled robots}{
    {the robot $r_i$ with lowest id settles on that node and sets $r_i.parent=r_i.portentered$ and $r_i.special=1$}\\
    {each unsettled robot makes $r_i.portentered=(r_i.portentered+1)$mod$\delta$} and
    {$r_i.dist=0$}\\
        \If{$r_i.portentered$ = parent pointer of the settled robot at the current node}
        {
            $r_i.state=backtrack$
        }
    {move through $r_i.portentered$}\\
    }
     \Else{
     {$r_i.state=backtrack$}\\
     {move through $r_i.portentered$}\\
    }
} 
\If{$r_i.state=backtrack$}{
{wait for $2\phi$ rounds}\\
{decrement $r_i.dist$ by 1}\\
{$r_i.portentered=(r_i.portentered+1)$mod$\delta$}\\

    \If{the settled robot $r_j$ has $r_j.parent=-1$}{
        \If{$r_i.portentered=0$}{
        the unsettled robots settle at the root }
        \Else{
        {$r_i.state=explore$}\\
        {move through $r_i.portentered$}
        }
        }
    \Else{
        \If{$portentered=parent$ or $virtualparent$}{
        {$r_i.state=backtrack$}\\
        {move through $portentered$}\\
        }
        
%\EndIf
    \Else{
    {$r_i.state=explore$}\\
    {move through $portentered$}\\
    {increment $r_i.dist$ by 1}\\
    }
}
}
}
\end{algorithm}

%%%%%%%%%%%%%%%%%%%%%%%%%%%%%%%%%%%%%%%%%%%%%%%%%%%%%%%%%
Using this we achieve D-2-D configuration and stage 1 completes. Now the stage 2 begins. The robot with the largest id, say $r_L$, is said to be the last settled robot and after it settles it sets $r_L.stage=2$. In order to terminate the algorithm, the $r_L$ restarts similar traversal as described in stage 1 from the root. Now two cases arise. Either the robot $r_L$ settles at a node other than the root node (in this case it goes to the root node as described below) or $r_L$ settles at the root itself. In any case, while at the root, $r_L$ sets $r_L.terminate=1$. From root, $r_L$ starts mimicking the stage 1 again by acting as the group of unsettled robots, to help the remaining settled robots terminate. Finally, $r_L$ terminates after reaching the node where it settled at the end of stage 1. Note that, if and when $r_L$ backtracks till the root, as some settled robots may meet $r_L$ on the path since they are continuing their back and forth movement, yet no settled robot increments their respective count variable since $r_L$ is not an unsettled robot anymore.
\begin{itemize}
    \item When the last robot $r_L$ settles at node $u_l$ other than the root
    \begin{itemize}
        \item The robot $r_L$ backtracks through $r_L.portentered$. It continues backtracking through the parent pointer to reach the root node. After reaching the root node, the robot $r_L$ begins mimicking stage 1 as the group of unsettled robots.
    \end{itemize}
    \item When the remaining robots settle at the root and $r_L$ is the largest id robot among the group
    \begin{itemize}
        \item The robot $r_L$ begins mimicking stage 1 as the group of unsettled robots.
    \end{itemize}
\end{itemize}

When a robot $r_i$ with $r.settled=1$ meets $r_L$ with $r_L.terminate=1$ or any robot $r_j$ with $r_j.stage=2$, it understands stage 2 is under progres, updates $r_i.stage=2$, goes back to its original position and waits for $r_L$ with $r_L.terminate=1$ to arrive there. When a robot $r_j$ other than $r_L$ meets $r_L$ with $r_L.terminate=1$ for the first time, following two cases are possible. If this meeting is done at the node where $r_j$ settled in phase 1, it understands it has to wait here till $r_L$ is here by acting as $r_j$ is also an unsettled robot which is with $r_L$. Then $r_j$ updates $r_j.act\_settled=1$ as if it becomes settle by the end of this phase and start following the algorithm of a settled robot of stage 1. The only difference is that, now it increments $r_j.count'$ instead of $r_j.count$. Else if this meeting is done at a neighboring node of the node where $r_j$ settled in phase 1, it changes $r_j.stage=2$, goes back to its original position and waits for $r_L$ with $r_L.terminate=1$ to arrive there.

The above description shows that, any settled robot $r$ in phase 2 increments their variable $r.count'$ only after becoming $act\_settled$, i.e., only after the corresponding round of stage 1 when $r$ became settled and started incrementing its $r.count$. Since $r_L$ works as the unsettled robot by following the algorithm of unsettled robots in stage 1 and its id is the largest, $r_L$ will settle in phase 2 again at the end and will terminate. By that time, all settled robots $count$ value match with their $count'$ value, and all terminate. With this, below we provide the algorithm of stage 2 in a formal way.

When the robot $r_L$ with $r_L.terminate=1$ and $r_L.stage=2$ reaches the root $u$, it initializes the value of $\phi=\delta(u)$, and begins the traversal in order to terminate the settled robots. Similar to the algorithm described above for the unsettled group of robots in stage 1, every time $r_L$ reaches a node, say, $v$, it waits for $2\phi$ rounds at that node. Also, it updates the value of $\phi$ with $max\{\phi, \delta(v)\}$.
After reaching a new node, decisions are made based on the following cases:
\begin{itemize}
    \item When $r_L.dist=1$ and $r_L.state=explore$
        \begin{itemize}
            \item If any settled robot visits with $r_j.act\_settled=1$, $r_j.special=0$ and $r_j.id$ is smaller than the id of $act\_settled$ robot already present with $r_L$ then $r_L$ backtracks via $r_L.portentered$
            \item if any settled robot visits with $r_j.act\_settled=1$ and $r_j.special=0$ but $r_j.id > $ the id of $act\_settled$ robot already present with $r_L$ or no $act\_settled=1$ robot visits then $r_L$ is set to explore and it moves through the incremented value of $portentered$
        \end{itemize}
    \item When $r_L.dist=2$ and $r_L.state=explore$
        \vspace{-0.05cm}
        \begin{itemize}
            \item If no settled robot with $r_j.act\_settled=1$ visits then this indicates that it is the original position where $r_L$ is settled and hence set $r_L.act\_settled=1$. Terminate $r_L$.
            \item if any settled robot visits with $r_j.act\_settled=1$ then $r_L$ backtracks through $r_L.portentered$
            \item if there is a robot $r_j$ present at the node with $r_j.stage=2$ but $r_j.act\_settled=0$ then $r_j.parent$ is set to $r_L.portentered$ and $r_j.special=1$. 
        \end{itemize}
    \item When $r_L.state=backtrack$
        \begin{itemize}
            \item If the reached node is the root node and there are some ports to be explored i.e. $(r_L.portentered+1)mod\delta \neq 0$ then $r_L.state$ is updated to $explore$, else $r_L$ and the remaining settled robots are terminated at this node.
            \item If the reached node is other than the root node then $r_L.portentered$ is compared with the value of $r_j.parent$ or $r_j.virtualparent$. If all the ports are explored then the state is changed to $backtrack$ otherwise $r_L$ explores through the incremented value of $r_L.portentered$.
        \end{itemize}
\end{itemize}

The pseudo-code of the algorithm for the last settled robot $r_L$ is given as Algorithm \ref{alg:reach_root}. The Algorithm \ref{alg:reach_root} uses the Algorithm \ref{alg:Begin_Termination} as a subroutine, where Algorithm \ref{alg:Begin_Termination} is the pseudo-code of what the last settled robot $r_L$ does after reaching the root.

%%%%%%%%%%%%%%%%

\begin{algorithm}
\caption{Help\_Termination(): Algorithm for the robot $r_L$}\label{alg:Begin_Termination}
%    {set $r_i.parent=-1$}\\
    {set $\phi = \delta_{u}$, $r_L.state=explore$ and $r_L.portentered = 0$}\\
    {move through $r_L.portentered$}\\
    {$r_L.dist$ = $r_L.dist+1$}\\
    \For{$phase > 0$}
    {
        \If{$r_L.dist = 1$ and $r_L.state = explore$}
        {
        
        {wait for $2\phi$ rounds}\\
            \If{any settled robot visits with $r_j.act\_settled=1$, $r_j.special = 0$, and $r_j.id < $  id of $act\_settled$ robot already present with $r_L$ }
            {
                {$r_L.state$ = $backtrack$}\\
                {update $\phi$= $max\{\phi, \delta(u)\}$}\\
                {move through $r_L.portentered$}\\
                {decrement $r_L.dist$ value to $0$}
            }
            \If{any settled robot visits with $r_j.act\_settled=1$, $r_j.special=0$, but $r_j.id > $ id of $act\_settled$ robot already present with $r_L$ or no $r_j.act\_settled=1$ robot visits}
            {
             
                    {$r_L.state=explore$}\\
                     {update $\phi$ = $max\{\phi , \delta(u)\}$}\\
                        {$r_L.portentered=(r_L.portentered+1)$mod$\delta$}\\
                             \If{$r_L.portentered=r_j.virtualparent$}{
                                    {$r_L.state=backtrack$}\\
                                    {move through $r_L.portentered$}
                                         }
                            \Else{
                                {move through $r_L.portentered$}\\
                                {increment $r_L.dist$ to 2}\\
                                }

                    }

        }
        \If{$r_L.dist = 2$ and $r_L.state = explore$}
        {
        {wait for $2\phi$ rounds}\\
        {update $\phi$= $max\{\phi, \delta(u)\}$}\\
            \If{no settled robot with $r_j.act\_settled=1$ visits}
            {
                {set $r_L.act\_settled=1$}\\
                {terminate $r_L$}
            }
            \If{any settled robot visits with $r_j.act\_settled=1$}
            {
                {set $r_L.state = backtrack$}\\
                {move through $r_L.portentered$}\\
                {decrement the value of $r_L.dist$}
            }
            \If{there is a robot $r_j$ present at the node with $r_j.stage=2$}
            {
                {set $r_L.dist = 0$}\\
                {set $r_L.portentered$ = $(r_L.portentered+1)mod\delta$}\\
                {move through $r_L.portentered$}
            }
        }
        \If{$r_L.state=backtrack$}
        {
        {wait for $2\phi$ rounds}\\
        {set $r_L.portentered = (r_L.portentered+1)mod\delta$}\\
            \If{the $act\_settled$ robot $r_j$ has $r_j.parent=-1$}
            {
                \If{$r_L.portentered = 0$}
                {
                    {set $r_L.act\_settled=1$}\\
                    {terminate $r_L$ and all the remaining settled robots at this node}
                }
                \Else
                {
                    {$r_L.state = explore$}\\
                    {move through $r_L.portentered$}
                }
            }
            
            \Else
            {
                \If{$r_L.portentered = r_j.parent$ or $r_j.virtualparent$}
                {
                {set $r_L.state = backtrack$}\\
                {move through $r_L.portentered$}
                }
                \Else
                {
                {$r_L.state = explore$}\\
                {move through $r_L.portentered$}\\
                {increment $r_L.dist$ by 1}
                }
            }
        }
    }
\end{algorithm}
%%%%%%%%%%%%%%%

\begin{algorithm}[h]
\caption{Algorithm for the robot $r_L$ with $r_L.terminate=1$ to reach the root node}\label{alg:reach_root}
\If{$r_L.terminate=1$ and settled at a node other than the root node}{
{set $r_L.stage = 2$, $r_L.state = backtrack$}\\
{move through $r_L.parent$}\\
\While{reached node is not the root node}
{
{wait for $2\phi$ rounds to get the parent port information}\\
{move through the parent pointer of that node}
}
{call Help\_Termination()}
}
\ElseIf{$r_L.terminate=1$ and $r_L$ is settled at the root node }
{call Help\_Termination()}

\end{algorithm}

The robots with $r_j.settled=1$ and $r_j.act\_settled=0/1$ proceeds based on the following cases.
\begin{itemize}
    \item If $r_j$ is a settled robot with $r_j.settled=1$ but $r_j.act\_settled=0$, when it meets any robot $r_i$ with $r_i.stage=2$ then it returns to its original position and waits there. It now becomes aware that the termination stage has started. Thus, it does not increment the value of $r_j.count$.
    \item If $r_j.act\_settled=1$ and $r_j.special=0$ then it continues the back and forth movement through all its neighbors and increments the value of $r_j.count'$ by $1$ as and when it meets $r_L$. When $r_j.count$ becomes equal to $r_j.count'$, it terminates at its original position where it was settled. 
    \item If $r_j.act\_settled=1$ and $r_j.special=1$ then it moves with the robot $r_L$ to the neighboring node through the updated value of $r_L.portentered$ and waits with $r_L$. It increments the value of $r_j.count'$ by $1$. Now two cases arise:
        \begin{itemize}
            \item If no $act\_settled$ robot visits with id lower than $r_j.id$ then $r_j$ sets\\ $r_j.virtualparent=r_L.portentered$
            \item Else move to the original settled position and set $r_j.special=0$
        \end{itemize}
\end{itemize}
The pseudo-code of the algorithm for each settled robot $r_j$ with $r_j.act\_settled=1$ is given in the Algorithm \ref{alg:stage2settled}.
%%%%%%%%%%%%%%%%%%%%%%%

\begin{figure}[]
\begin{subfigure}{.5\textwidth}
  \centering
  % include first image
  \includegraphics[width=.6\linewidth]{Figures/N1.jpg}  
  \caption{}
  \label{fig:sub-first}
\end{subfigure}
\begin{subfigure}{.5\textwidth}
  \centering
  % include second image
  \includegraphics[width=.6\linewidth]{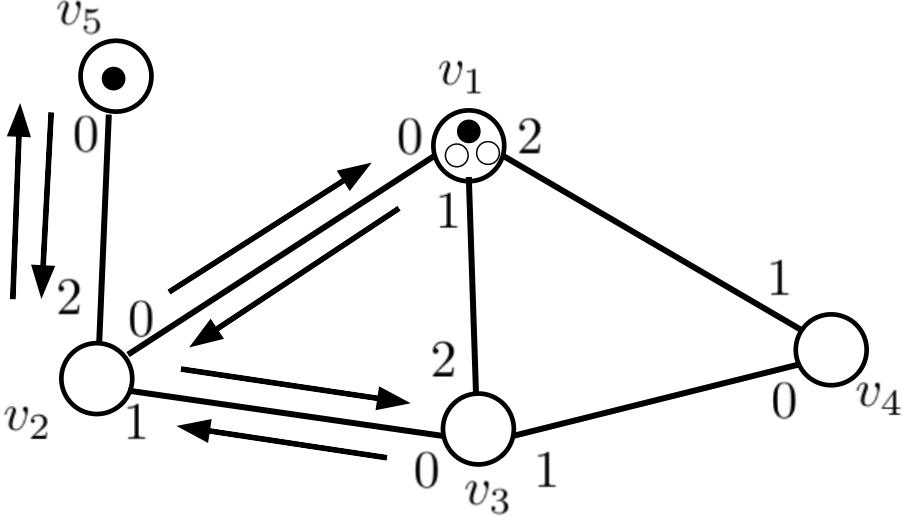}  
  \caption{}
  \label{fig:sub-second}
\end{subfigure}
\vspace{0.7cm}

\begin{subfigure}{.5\textwidth}
  \centering
  % include third image
  \includegraphics[width=.6\linewidth]{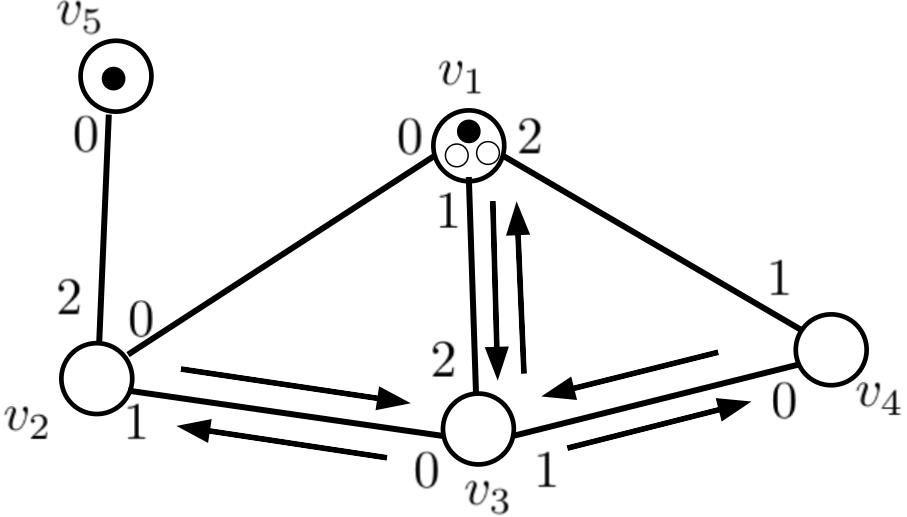}  
  \caption{}
  \label{fig:sub-third}
\end{subfigure}
\begin{subfigure}{.5\textwidth}
  \centering
  % include fourth image
  \includegraphics[width=.6\linewidth]{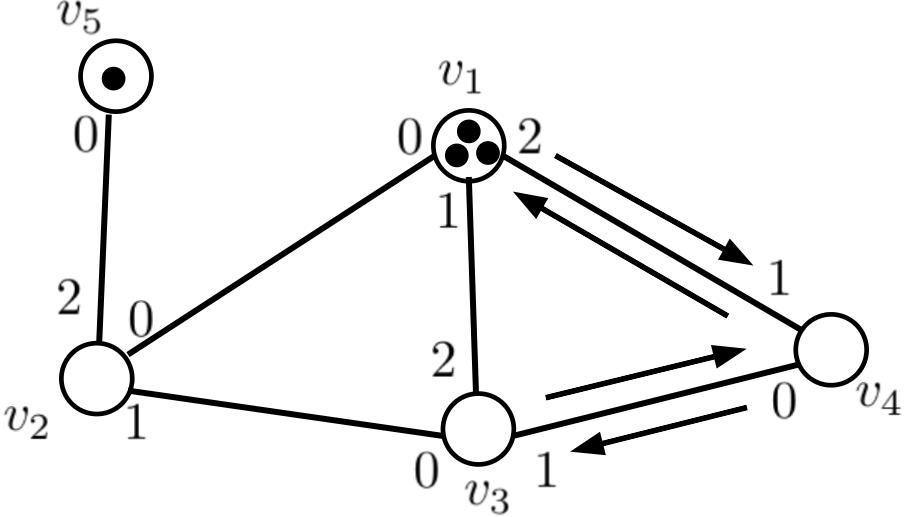}  
  \caption{}
  \label{fig:sub-fourth}
\end{subfigure}
\caption{(a) The initial configuration with four robots at $v_1$.
(b) One robot, say $r$, settles at $v_1$ and updates $r.special=1$. The remaining robots leave through port 0 and the settled robot moves along with the group to the neighboring node $v_2$. The robot $r$ updates the value $r.count=1$. They wait for $2\phi$ rounds and since no settled robot visit the group, they move further to node $v_3$ and update $r_i.dist=2$ while $r$ moves back to $v_1$ after storing 0 as the parent pointer of $v_2$ in $r.virtualparent$ and updates $r.special=0$. The group of unsettled robots wait for $2\phi$ rounds at $v_3$ and due to the visit by the settled robot $r$ they backtrack to the previous node $v_2$ while $r$ updates the value of $r.count=2$ From $v_2$, it explores $v_5$ and since no settled robot visit $v_5$ during the wait of $2\phi$ rounds, one robot, say $r'$ settles there. The group backtracks to $v_2$. The settled robot $r'$ updates $r'.count=1$. The settled robot $r$ increments the value of $r.count$ as well. The group of unsettled robots further backtracks to $v_1$. 
(c) Now the group of unsettled robots as well as $r$, move through port $1$ and explore $v_3$. From $v_3$ it explores $v_2$ while $r$ goes back to $v_1$ after storing $2$ as the parent pointer of $v_3$ in $r.virtualparent$. As and when the settled robots $r$ and $r'$ meets the group of unsettled robots at any node, they increment their value of $count$. During the wait of $2\phi$ rounds, $r$ visits $v_2$ and thus, the group backtracks to the node $v_3$. Then the group explores $v_4$ and due similar reasons, the group backtracks to to $v_3$. After waiting $2\phi$ rounds, it learns the parent pointer from $r$ and backtracks to $v_1$.
(d) Now the group moves through port $2$ to explore $v_4$ and it further moves to explore $v_3$. Due to the visit of settled robot at $v_1$, the group backtracks to $v_4$ and then further backtracks to $v_1$. As $v_1$ is the root and tall the ports of $v_1$ are explored, the unsettled robots settle here at $v_1$. At the end of this stage 1, the value of $r.count=15$ while $r'.count=2$.}
\label{fig:two}
\end{figure}

\begin{algorithm}
\caption{Algorithm for each robot $r_j$ with $r_j.settled=1$ and $r_j.act\_settled=1$ during the termination stage}\label{alg:stage2settled}

\If {$r_j.act\_settled=1$ and $r_j.special=0$}{
{do the back and forth movement through all its ports}\\
    \If{$r_j$ meets $r_L$ with $r_L.terminate=1$}{
    {wait with $r_L$ till it leaves that node}\\
    {set $r_j.count'$ = $r_j.count'+1$}\\
    { move to its original position and then resume the back and forth movement}\\
        \If{$r_j.count == r_j.count'$}{terminate $r_j$}
    }
    }
\ElseIf{$r_j.act\_settled=1$ and $r_j.special=1$}{
{move along with $r_L$ through the updated value of $r_L.portentered$ and wait till $r_L$ leaves}\\
{increment the value of $r_j.count'$}\\
    \If{no $act\_settled$ robot visits with id lower than $r_j.id$}{
        {Set $r_j.virtualparent=r_L.portentered$}\\
        {move to the original position via $r_j.portentered$}\\
        }
    \Else{
    {move to the original position via $r_j.portentered$}
    }
{update $r_j.special=0$}
}
\end{algorithm}
%%%%%%%%%%%%%%%%%%%%%%%
The pseudo-code of the Distance-2-Dispersion with Termination is given in the Algorithm \ref{alg:mainalgorithm}. Figure \ref{fig:two} shows the run of stage 1 of our algorithm.
\begin{algorithm}
\caption{D-2-D\_with\_Termination}\label{alg:mainalgorithm}
\If{$r_i.settle=0$}
{
call Algorithm \ref{alg:cap}
}
\ElseIf{$r_i.settle=1$ and $r_i.act\_settled=0$}
{
call Algorithm \ref{alg:settle}
}
\ElseIf{$r_i.settle=1$ and $r_i.terminate=1$}
{
call Algorithm \ref{alg:reach_root}
}
\ElseIf{$r_i.settle=1$ and $r_i.act\_settled=1$}
{
call Algorithm \ref{alg:stage2settled}
}
\end{algorithm}

%%%%%%%%%%%%%%%%%%%%%%%%%%%%%%%%%%%%%%%%%%%

%%%%%%%%%%%%%%%%%%%%%%%%%%%%%%%%%%%%%%%%%%%%%%%

%%%%%%%%%%%%%%%%%%%%%%%%%%%%%%%%%%%%%%%%%%%%%%%

\subsection{Analysis of the Algorithm}
    \begin{definition}{ Tree Edge:}\label{def:1}
       An edge $(u,v)$ is said to be a tree edge if the group of unsettled robots in stage 1 reaches $v$ through $(u,v)$ such that either the settled robot at node $u$ (if exists) stores the parent pointer of the node $v$ or the minimum id robot among the group of unsettled robots settles at $v$. 
    \end{definition}
\begin{remark}
       A settled robot $r_i$ in stage 1 stores the parent pointer for its adjacent node $u$ at some round $t$ only if $r_i.special=1$, $r_j.dist=1$ for any  unsettled robot $r_j$  at $u$ and no settled robot with id lower than $r_i.id$ visits $r_j$ during the $2\phi$ rounds i.e., during the waiting period. 
\end{remark}

    \begin{theorem}\label{th:nonadjacent}
    By the end of the Algorithm \ref{alg:mainalgorithm}, there are no two robots that are settled at adjacent nodes.
    \end{theorem}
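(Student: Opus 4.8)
The plan is to reduce everything to stage~1. First I would observe that the configuration at the end of Algorithm~\ref{alg:mainalgorithm} is the same as the configuration at the end of stage~1: stage~2 never settles a robot at a new node (it only re-runs the traversal to drive the count/count$'$ matching), and $r_L$ ultimately returns exactly to the node at which it settled during stage~1. Hence it suffices to show that during stage~1 no robot ever settles at a node adjacent to an already-settled robot. Inspecting Algorithm~\ref{alg:cap}, a robot settles in stage~1 only (i) as $r_1$ at the root at the end of phase~$0$, (ii) as the lowest-id robot of the unsettled group at a node $v$ at which the group stands with $r_i.dist=2$ in state $explore$ and \emph{no} settled robot visits it during the $2\phi$-round wait, or (iii) as a leftover robot at the root once all of the root's ports have been explored. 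Cases (i) and (iii) put robots only on the root, already occupied by $r_1$; these are co-located, never adjacent, so no violation is created among them.

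The crux is a lemma I would establish first: \emph{whenever in stage~1 the unsettled group waits at a node $v$ for its $2\phi$ rounds, every robot settled at a node $u\in N(v)$ for the whole period meets the group at $v$ during that period.} To prove it, fix such a robot $r$ with home node $u$. Since the group sits at $v$ for the whole period, $r$ is not frozen-waiting with the group at some other node; and $r$ cannot have $special=1$ active then, because a $special=1$ robot accompanies the group only while the group is at $r$'s home ($dist=0$) or at an immediate neighbour of it in $explore$ state ($dist=1$), whereas nothing settles at a node with $dist\le 1$. So throughout the period $r$ is running (or resuming) its back-and-forth sweep over $N(u)$, a sweep that takes $2\delta_u$ rounds and touches every neighbour of $u$, in particular $v=u(j)$ for the appropriate port $j$. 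Because $r$ is settled at $u$, the group visited $u$ at an earlier round and therefore updated the shared, monotone value $\phi$ to at least $\delta_u$; hence $\phi\ge\delta_u$ when the group waits at $v$, and a direct round count shows $r$ lands on $v$ within the first $2\delta_u\le 2\phi$ rounds of the wait, at which point, by Algorithm~\ref{alg:settle}, $r$ stays with the group until it leaves, so the group observes it. I expect the fiddly part here to be running this count in each way the group can have arrived at $v$ at a phase boundary (a forward step to $dist=2$, or the transition in which the group leaves a node at which $r$ was co-located and $r$ heads home, entering its sweep mid-cycle); the monotonicity of $\phi$ together with $\phi\ge\delta_u$ is exactly what keeps the $2\phi$-round budget sufficient in all of them.

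Finally I would deduce Theorem~\ref{th:nonadjacent}. Assume for contradiction that two robots end up settled at adjacent nodes $u$ and $v$. Since there is a single unsettled group in stage~1 and it settles at most one robot per round at its current node, one of the two, say the one at $v$, settled strictly later, necessarily via case (ii): the group was at $v$ with $dist=2$ in $explore$ and saw no settled robot during its $2\phi$-round wait. But a robot was already settled at $u\in N(v)$ and stayed settled, contradicting the lemma. If instead one of $u,v$ is the root, the other carries a robot that settled via case (ii) at a node adjacent to the root, and since $r_1$ occupies the root from phase~$0$ on, the same lemma is violated. Either way we reach a contradiction. The main obstacle, as noted, is the timing bookkeeping in the lemma: verifying that $2\phi$ rounds always let a neighbouring settled robot complete a full sweep and meet the group before the group settles; everything else is routine case analysis over the settling rules of Algorithm~\ref{alg:mainalgorithm}.
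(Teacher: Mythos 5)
Your proposal is correct and follows essentially the same route as the paper's proof: the settled robots' back-and-forth sweep together with the $2\phi$-round wait guarantees the group detects any settled neighbour before settling, and stage~2 never relocates a robot. Your version is in fact more explicit than the paper's brief argument, since you isolate the key lemma and justify the $2\phi$-round budget via the monotonicity of $\phi$ (giving $\phi\ge\delta_u$), a point the paper leaves implicit.
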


    \begin{proof}
    The Algorithm \ref{alg:settle} for the settled robots in stage 1 guarantees that the settled robots show their presence by back-and-forth movement to their one-hop neighbors. Thus, when the group of unsettled robots visits a node and wait for $2\phi$ rounds, then they encounter the settled robot, if present, in its one-hop neighbor. And according to the Algorithm \ref{alg:cap} for unsettled robots in stage 1, no unsettled robot settles if some settled robot meets the unsettled robot in some node. Also according to Algorithm \ref{alg:stage2settled}, the settled robots in stage 2 settle at nodes where they get settled in stage 1. This guarantees that no two adjacent nodes are occupied by the robots.
    \end{proof}
   % \qed
    
     \begin{lemma}\label{lem:multiatroot}
     Multiple robots can settle only at the root. 
    \end{lemma}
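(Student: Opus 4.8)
The plan is to argue that whenever the group of unsettled robots settles its minimum-id member at a node other than the root, it does so at a node at $\mathtt{dist}=2$ from the last settled robot, and at such a node no settled or virtually-settled robot is present, so exactly one robot of the group peels off and the rest continue — hence only a single robot ever settles there. The only place the algorithm deposits \emph{all} remaining unsettled robots together is the root, and this happens exactly when the backtracking group returns to the root and finds every port of the root explored (the branch ``the unsettled robots settle at the root'' in Algorithm~\ref{alg:cap}, triggered when the settled robot at the current node has $\mathtt{parent}=-1$ and $(\mathtt{portentered}+1)\bmod\delta=0$). So the lemma reduces to showing that (i) at every non-root node where a settle action occurs, precisely one robot settles, and (ii) the ``dump all remaining robots'' action occurs only at the root.

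For (i): inspect Algorithm~\ref{alg:cap}. A robot settles at a non-root node only in the case $r_i.\mathtt{dist}=2$ and $r_i.\mathtt{state}=\mathit{explore}$ with the guard ``no settled robot visits the group during the $2\phi$ waiting rounds''. By Theorem~\ref{th:nonadjacent}'s argument (the back-and-forth movement of Algorithm~\ref{alg:settle}), this guard certifies that the current node has no settled robot on it and no settled robot in its one-hop neighbourhood; in particular it is genuinely empty. The instruction is ``the robot $r_i$ with lowest id settles on that node'' — a single robot — after which the remaining robots update $\mathtt{portentered}$, set $\mathtt{dist}=0$ and move on. Thus the group never revisits this node in $\mathit{explore}$ state with $\mathtt{dist}=2$ (on a later visit either it carries the stored $\mathtt{virtualparent}$/$\mathtt{parent}$ of this node and backtracks, or the just-settled robot's back-and-forth movement meets it and forces a backtrack), so no second robot can ever settle here. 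This uses the correctness of the DFS-like traversal restricted to depth $2$, which is implicit in the surrounding discussion and the definition of tree edges.

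For (ii): the only code path that settles more than one robot is the root branch, whose guard is precisely ``the settled robot $r_j$ has $r_j.\mathtt{parent}=-1$'' together with $\mathtt{portentered}=0$ after incrementing, i.e. all ports of that node have been explored and that node's settled robot has parent pointer $-1$. By construction $\mathtt{parent}=-1$ holds only for the robot settled at the root (it is set to $-1$ at phase $0$ and to an actual port value $\ge 0$ for every robot that settles later, in the $\mathtt{dist}=2$ branch). Hence this branch fires only when the group is back at the root with all root ports exhausted, which is exactly the termination of the stage-1 traversal; at that moment all still-unsettled robots settle together at the root.

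The main obstacle I expect is (i): one must be careful to rule out the scenario where the group, through two different forward explorations, reaches the \emph{same} empty node twice at $\mathtt{dist}=2$ before anyone settled there — which would let two robots settle at a non-root node on the second visit too, if the first visit had itself backtracked. This requires showing that the depth-$2$-restricted traversal, together with the $\mathtt{virtualparent}$ bookkeeping of adjacent settled robots, behaves like a proper DFS in the sense that every empty node is first \emph{entered in explore state at $\mathtt{dist}=2$} exactly once before it becomes occupied; equivalently, that a node, once it or a neighbour stores a (virtual) parent pointer for it, is thereafter always recognized and backtracked from. That bookkeeping is spread across the $\mathtt{dist}=1$/$\mathtt{dist}=2$ explore cases and the $\mathtt{backtrack}$ case of Algorithm~\ref{alg:cap}; assembling it into a clean invariant — ``each node is the head of at most one tree edge, and the subgraph of tree edges is a spanning tree rooted at the root'' — is the crux, after which the lemma is immediate.
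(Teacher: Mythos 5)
Your proposal is correct and follows essentially the same route as the paper's own (much terser) proof: the multi-settle branch is guarded by $parent=-1$, which holds only for the root, and everywhere else the settle instruction deposits exactly one robot. One remark: the ``crux'' you flag at the end --- needing a full DFS-correctness invariant to rule out two robots settling at the same non-root node across two separate $dist=2$ visits --- is not actually required for this lemma. The settle action at a non-root node fires only when the guard ``no settled robot visits during the $2\phi$ wait'' passes, and the moment it fires once, that node becomes occupied; on any subsequent visit the robot settled \emph{at that very node} is present with the group, so the guard necessarily fails and the group backtracks. Hence the settle action cannot fire twice at the same non-root node regardless of whether the depth-2 traversal revisits empty nodes, and your part (i) closes without the spanning-tree invariant (which the paper only needs later, for Lemmas \ref{lem:tree}--\ref{lem:spanning} and the round bound). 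Your part (ii) matches the paper's argument exactly.
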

    \begin{proof}
    When the robots complete the traversal of the graph and do not find any node to settle satisfying the conditions of the D-2-D problem, they finally reach the root of the graph to continue Algorithm \ref{alg:cap} and traverse through the root of the graph. The robots can easily recognize the root as the parent pointer of the root is $-1$. After following the algorithm from the root and subsequently exploring through all outgoing edges, robots backtrack to root only if they don't find nodes to settle. In this case, they settle at the root. Thus, algorithm \ref{alg:cap} leads multiple robots to settle only at the root.
  %  \qed
  \end{proof}
    
      \begin{lemma}\label{lem:conn}
    If multiple robots settle at the root in stage 1, then it is guaranteed that each node is visited by a group of unsettled robots at least once.
    \end{lemma}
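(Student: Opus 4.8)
The plan is to reduce the statement to a depth-first-search coverage argument, the only non-standard feature being that forward exploration is cut off at $dist=2$. By the root-handling of Algorithm~\ref{alg:cap} (cf.\ Lemma~\ref{lem:multiatroot}), the surplus robots settle at the root only after the group of unsettled robots returns to the root in the backtrack state with all of the root's ports already explored; hence it suffices to prove that once the stage-$1$ traversal returns to the root having explored every port of the root, every node of $G$ has been visited by the group at least once. I will use the tree edges of Definition~\ref{def:1}: a non-root node is never re-settled, and the parent port of a $dist=1$ node is stored only by its lowest-id settled neighbour, so the tree edges form a tree $\mathcal T$ rooted at the root in which the $\mathcal T$-children of a settled node are the $dist=1$ nodes reached from it, and the $\mathcal T$-children of a $dist=1$ node are the nodes settled at $dist=2$ from it.

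The key claim is that, when the traversal ends, (i) for every settled node $u$ the group has traversed the edge $u\to u(q)$ for every port $q$ of $u$, and (ii) for every node $w$ visited at $dist=1$ the group has traversed $w\to w(q)$ for every port $q$ of $w$. Claim~(i) is an induction on $\mathcal T$ from the leaves up: while the group is at a settled node $u$ (so $r_u.special=1$) it increments its port counter, descends, returns on backtracking, increments again, and leaves $u$ only once that counter has cycled back to $r_u.parent$; since the whole traversal terminates, each such descent has been completed. For Claim~(ii), let $w$ be visited at $dist=1$ and let $u^\ast$ be the settled neighbour of $w$ whose robot has the smallest id. By Claim~(i) the group traverses $u^\ast\to w$; the robot $r_{u^\ast}$ accompanies it with $r_{u^\ast}.special=1$, and since $u^\ast$ has the smallest id among the settled neighbours of $w$, no settled robot of smaller id visits during the $2\phi$ wait, so the $dist=1$ branch of Algorithm~\ref{alg:cap} makes the group cycle through and explore all ports of $w$ while $r_{u^\ast}.virtualparent$ is reset to the port of $w$ pointing back to $u^\ast$, so this cycle ends precisely after every port of $w$ has been traversed.

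The lemma then follows by connectivity. Suppose some node $v$ is never visited; since the root is visited and $G$ is connected, pick an edge $(w,v)$ with $w$ visited and $v$ not. Every visited node is settled or adjacent to a settled node: a node visited at $dist=1$ neighbours a settled node, and a node visited at $dist=2$ is, by the $dist=2$ rule of Algorithm~\ref{alg:cap}, either already settled, or receives a settled robot, or has a settled robot visit it. If $w$ is settled, Claim~(i) says the group traversed the edge of $w$ towards $v$, so $v$ was visited, a contradiction. If $w$ is not settled, fix a settled neighbour $s$ of $w$; by Claim~(i) the group traversed $s\to w$, so $w$ was visited at $dist=1$, whence Claim~(ii) says the group traversed every port of $w$, in particular reaching $v$ at $dist=2$ --- again a contradiction. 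Hence every node is visited.

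The main obstacle is the $dist=2$ cut-off: a node first seen at $dist=2$ that does not receive a settled robot is abandoned in the very next phase, so one must argue that nothing reachable only through such a node is lost. That is exactly the content of Claim~(ii), whose delicate point is the hand-off to the lowest-id settled neighbour, together with checking that the $virtualparent$ bookkeeping of the accompanying settled robot still correctly detects when all ports of a $dist=1$ node have been explored, even though that robot may have recorded a different $virtualparent$ during an earlier descent.
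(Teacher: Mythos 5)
Your proof is correct and follows essentially the same route as the paper's: both arguments hinge on the observation that when the group reaches a node at distance $1$ from its lowest-id settled neighbour, no smaller-id settled robot can force a premature backtrack, so every port of that node is explored, and connectivity then rules out an unvisited node. Your Claims (i) and (ii) merely make explicit what the paper compresses into its ``without loss of generality'' step (namely, that the group really does traverse every port of every settled node and hence does visit each such distance-$1$ node from its lowest-id settled neighbour), so your write-up is, if anything, a more complete rendering of the paper's own contradiction argument.
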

    \begin{proof}
    Let us suppose there is a node $u$ which is not visited at all but at least one of its one-hop neighbors, say $v$, is visited. This implies, that every time the group reached $v$, either it backtracked from $v$ or it explored all the ports except the port joining $v$ with $u$. The latter case is not possible as the Algorithm \ref{alg:cap} increments the value of $portentered$ unless its value is equal to the value of the parent pointer. 
    
    Now without loss of generality let us consider the case when $v$ is visited by the group from the node that contains the lowest id settled robot among the ids of the settled robots at one-hop neighbors of $v$. The Algorithm \ref{alg:cap} ensures that the group of unsettled robots backtracks from $v$ only when all the ports of $v$ are explored. And hence $u$ must be explored and this is a contradiction to the existence of such a node $u$. 
  %  \qed
  \end{proof}

    \begin{theorem}\label{th:novacant}
    By the end of the Algorithm \ref{alg:mainalgorithm}, multiple robots settled at the root implies no vacant node left such that none of its neighbors contains a settled robot.
    \end{theorem}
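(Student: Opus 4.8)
The plan is to argue by contradiction. Assume multiple robots settle at the root, and suppose there is a node $w$ that is vacant at the end of the algorithm while no neighbour of $w$ hosts a settled robot. Occupied positions only accumulate during stage 1 and are reproduced, never vacated, in stage 2 (every $act\_settled$ robot in Algorithm \ref{alg:stage2settled} terminates at its stage-1 node and $r_L$ terminates at the root), so already throughout stage 1 neither $w$ nor any neighbour of $w$ carries a settled robot; in particular $w$ is not the root. Since the hypothesis of Lemma \ref{lem:conn} holds, the group of unsettled robots visits $w$ at least once during stage 1. I will fix one such visit and examine the value of the counter $r_i.dist$ the group holds while standing on $w$.

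The key auxiliary fact I would prove is the invariant: whenever the group stands on a node $x$ with $r_i.dist = d$, there is a settled robot at graph-distance at most $d$ from $x$. This holds immediately after a robot settles ($d = 0$, robot on $x$), is maintained when the group leaves a settled robot's node (so $d = 1$ and a neighbour is occupied), and is restored after each backtracking step, consistently with how Algorithm \ref{alg:cap} increments and decrements $r_i.dist$ and with the fact that $special = 1$ robots travel one hop with the group. Consequently, if the group ever reaches $w$ with $r_i.dist \le 1$, a node within the closed neighbourhood of $w$ is occupied; as $w$ itself is vacant, a neighbour of $w$ must be occupied, a contradiction.

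Hence every visit of the group to $w$ occurs with $r_i.dist = 2$. I would next note that a node is entered for the first time only by a forward (explore) move --- backtracking in Algorithm \ref{alg:cap} always follows a $parent$ or $virtualparent$ port and therefore leads to an already-visited node --- and that whenever the group finishes a phase standing on a node with $r_i.dist = 2$ it arrived there by such an explore move and is still in state $explore$. Thus, in the phase following the first entry into $w$, the ``$r_i.dist = 2$ and $state = explore$'' branch of Algorithm \ref{alg:cap} runs and the group waits $2\phi$ rounds. If no settled robot visits during this wait, the lowest-id unsettled robot settles on $w$, contradicting the vacancy of $w$. If some settled robot $r_m$ visits the group on $w$, then --- since by Algorithm \ref{alg:settle} a settled robot only ever steps onto its own immediate neighbours during its back-and-forth movement --- $w$ is adjacent to the node where $r_m$ is settled, again a contradiction. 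Every case is impossible, so no such $w$ exists, which is the statement.

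The step I expect to be the main obstacle is making the $r_i.dist$ invariant, and the claim ``the first entry into $w$ at $r_i.dist = 2$ is an explore move and is processed by the settling check'', fully rigorous against all branches of Algorithm \ref{alg:cap}; this needs a per-phase case analysis of how the pair $(r_i.dist, state)$ evolves, especially through the backtrack branch where $r_i.dist$ is first decremented and then possibly re-incremented, together with the observation that $parent$/$virtualparent$ pointers always point toward already-explored nodes. The remaining points --- persistence of occupied positions into stage 2 and non-emptiness of the unsettled group at the moment it first reaches $w$ (which follows again from Lemma \ref{lem:conn}) --- are routine.
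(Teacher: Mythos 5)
Your proposal is correct and follows essentially the same route as the paper's own proof: invoke Lemma \ref{lem:conn} to get that the hypothetical node is visited, argue that any such visit must occur with $r_i.dist=2$, and then conclude that during the $2\phi$-round wait either a robot settles there or a settled neighbour reveals itself, both contradictions. The paper simply asserts the $r_i.dist=2$ step and the persistence of settled positions into stage 2, whereas you spell these out via an explicit invariant; the substance is the same.
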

\begin{proof}
    Let us suppose there is a vacant node $u$ in the graph such that no settled robot is present in any of its one-hop neighbors in the stage 1. Lemma 1 proves that node $u$ is visited at least once. According to our algorithm for unsettled robots in stage 1, i.e. Algorithm \ref{alg:cap}, when the group visited $u$, each of the robots $r_j$ in the group must set $r_j.dist=2$. During the waiting period, there were no settled robots in the neighbors of $u$ to visit $u$. Hence the minimum id robot must have settled there. This contradicts the presence of such a node in the graph.
   \end{proof} 

        \begin{observ}\label{obs:maxis}
     If multiple robots settle in the root, it follows from Theorem \ref{th:nonadjacent} and Theorem \ref{th:novacant} that the nodes with settled robots form a maximal independent set.
      \end{observ}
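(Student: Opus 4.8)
\textbf{Proof proposal for Observation \ref{obs:maxis}.} The plan is to unpack the two set-theoretic properties that define a maximal independent set and match each one to an already-proved theorem. Let $S\subseteq V$ denote the set of nodes that contain at least one settled robot at the end of Algorithm \ref{alg:mainalgorithm} (note that the root, even though it may hold several robots, contributes just the single node to $S$). First I would recall the definition: $S$ is a maximal independent set of $G$ iff (i) no two nodes of $S$ are adjacent, and (ii) $S$ is dominating, i.e.\ every node in $V\setminus S$ has at least one neighbor in $S$ — equivalently, $S$ cannot be extended by any further vertex while staying independent.

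Next I would verify (i) directly from Theorem \ref{th:nonadjacent}, which states that no two robots are settled at adjacent nodes; hence no two nodes of $S$ are adjacent and $S$ is independent. Then I would verify (ii) from Theorem \ref{th:novacant}: under the standing hypothesis that multiple robots settle at the root, that theorem rules out the existence of a vacant node all of whose one-hop neighbors are free of settled robots. A "vacant node" is precisely a node of $V\setminus S$, so this says exactly that every node of $V\setminus S$ has a neighbor in $S$, which is the domination/maximality condition (ii). Combining (i) and (ii), $S$ is a maximal independent set.

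I expect no real obstacle here, since the statement is a straightforward corollary; the only points needing a line of care are (a) making explicit that "occupied node" and "node with a settled robot" coincide, so that the root with several robots is one element of $S$ and not a counterexample to independence, and (b) observing that the hypothesis of Observation \ref{obs:maxis} (multiple robots at the root) is exactly the hypothesis required to invoke Theorem \ref{th:novacant}, while Theorem \ref{th:nonadjacent} holds unconditionally. With these remarks in place the proof is two or three sentences long.
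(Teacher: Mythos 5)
Your proposal is correct and follows exactly the route the paper intends: the paper states this as an immediate observation, with independence supplied by Theorem \ref{th:nonadjacent} and maximality (domination of every vacant node) supplied by Theorem \ref{th:novacant} under the hypothesis that multiple robots settle at the root. Your two points of care — treating the multiply-occupied root as a single element of the settled set, and noting which theorem needs the root hypothesis — are sensible and consistent with the paper.
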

    
    \begin{theorem}
    D-2-D with termination can be run by the robots with $O(\log \Delta)$ additional memory.
    \end{theorem}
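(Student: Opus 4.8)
The plan is to audit every variable that a robot must store during any stage of Algorithm~\ref{alg:mainalgorithm} and show each fits in $O(\log\Delta)$ bits, so that the total is $O(\log\Delta)$. First I would go through Table~\ref{table:symbols} entry by entry. The port-valued variables $r_i.parent$, $r_i.portentered$, $r_i.virtualparent$ each hold a single port number in $[0,\delta-1]\cup\{-1\}$, and since $\delta\le\Delta$ this is $O(\log\Delta)$ bits each; there are only a constant number of them. The small-range flags $r_i.dist\in\{0,1,2\}$, $r_i.special\in\{-1,0,1\}$, $r_i.settled\in\{0,1\}$, $r_i.stage\in\{1,2\}$, $r_i.act\_settled\in\{0,1\}$, $r_i.terminate\in\{0,1\}$, and the $state\in\{explore,backtrack\}$ each take $O(1)$ bits. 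The variable $\phi$ stores the maximum degree seen so far, which is at most $\Delta$, hence $O(\log\Delta)$ bits. Finally each robot knows its own id, but ids lie in $[1,L]$; here one must be slightly careful, and I would note that the id is read-only input, not working memory, or alternatively restrict attention to the additional working memory as the statement explicitly says ``additional memory'' — this is why the theorem is phrased that way.

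The one genuinely non-trivial item is the pair of counters $r_i.count$ and $r_i.count'$, since a priori these could grow as large as the number of times a settled robot meets the traversing group, which over the whole execution could be $\Theta(m)$ or more, requiring $\Theta(\log m)$ bits. The key observation to rescue the bound is that a settled robot never needs to compare $count$ against anything except $count'$, and it only ever increments the two in lockstep over the two stages: in stage~1 it increments $count$ once each time the unsettled group leaves a neighbouring node while it is waiting with them, and in stage~2, once it has become $act\_settled$, it increments $count'$ once each time $r_L$ leaves that same node. Because stage~2 is a faithful replica of stage~1 (the traversal rule followed by $r_L$ is literally the stage-1 unsettled-robot algorithm, and $act\_settled$ robots re-enter the process in the same order), the multiset of meeting-events in stage~2 is identical to that of stage~1. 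Hence the robot does not need the numeric value of $count$ at all: it can instead keep a single reversible ``ledger'' — or, more simply, the algorithm can be reorganized so that in stage~1 each settled robot \emph{pushes} a unit onto an (unbounded in principle but) we only need the \emph{difference} $count-count'$, which stays within a bounded range. I would make this precise by arguing that at every point in stage~2 the difference $count - count'$ is nonnegative and equals the number of not-yet-matched stage-1 meetings, and that consecutive meetings of the same settled robot with $r_L$ are separated in stage~2 by the corresponding stage-1 meetings, so it suffices to store $count-count'$ which never exceeds the maximum ``lag'', and that lag is $O(\Delta)$ because $r_L$'s traversal and the settled robot's back-and-forth cycle both have period $O(\Delta)$ between successive encounters. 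Thus $O(\log\Delta)$ bits suffice for the counters too.

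With all variables accounted for, the total additional memory is a constant number of $O(\log\Delta)$-bit and $O(1)$-bit quantities, giving $O(\log\Delta)$ per robot, and the conclusion follows. The main obstacle, as indicated, is the counter argument: one must be convinced that a settled robot can detect the equality $count=count'$ and thereby terminate without ever holding a value larger than $O(\log\Delta)$ bits; I would present this as a lemma stating that for every settled robot the quantity $count-count'$ is, throughout stage~2, a nonnegative integer bounded by $2\phi\le 2\Delta$, so storing this difference (together with a one-bit flag recording whether stage~2 has begun, which resets the roles) is enough, and the rest of the proof is the routine tabulation above.
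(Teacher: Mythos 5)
Your tabulation of the flag and port-valued variables matches the paper's, but your treatment of the counters $r_i.count$ and $r_i.count'$ — which you correctly identify as the only non-trivial item — contains a genuine gap. Your proposed rescue is to store only the difference $count-count'$ and to argue that this ``lag'' stays within $O(\Delta)$ because the two kinds of meetings have period $O(\Delta)$ between successive encounters. This fails: stage~2 does not begin until stage~1 has finished entirely, so the two sequences of meetings are not temporally interleaved. At the moment stage~2 starts, $count$ already holds its final stage-1 value while $count'$ is still $0$, so the difference equals the full final value of $count$. Storing the difference therefore requires exactly as many bits as storing $count$ itself, and your scheme buys nothing; moreover it would require modifying the algorithm, which the theorem as stated does not permit.

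The argument you are missing — and the one the paper actually uses — is a locality bound on the counter's \emph{value}, not a compression of its representation. A settled robot at a node $v$ only ever meets the group of unsettled robots at one of the at most $\Delta+1$ nodes of its closed one-hop neighborhood, since its back-and-forth movement never takes it further. The group can be present at those nodes only by traversing one of the $O(\Delta^2)$ edges incident to that neighborhood, and by Lemmas~\ref{lem:tree} and~\ref{lem:nontree} each edge is traversed at most $4$ times. Hence $r_i.count$ is bounded by $O(\Delta^2)$, not $\Theta(m)$ as you feared, and the same argument applied to $r_L$'s replay in stage~2 bounds $r_i.count'$ by $O(\Delta^2)$. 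Both counters therefore fit in $O(\log\Delta)$ bits as stored, with no change to the algorithm. Your worry was legitimate but the resolution is this locality argument, not the difference-tracking device.
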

    \begin{proof}
    The variables $r_i.state$, $r_i.stage$, $r_i.settled$, $r_i.act\_settled$ and $r_i.special$ requires 1 bit of memory while $r_i.dist$ requires 2 bits of memory. The variables $r_i.parent$, $r_i.portentered$ and $r_i.virtualparent$ requires $O(\log \Delta)$ bits of memory. The settled robot at a node $v$ with $\delta(v)\leq\Delta$ can meet the group of unsettled robots at at most $(\Delta+1)$ nodes including node $v$ and there can be at most $O(\Delta^2)$ associated edges with these nodes. Since the group of unsettled robots visits any edge at most 4 times, the variable $r_i.count$ can take maximum value that is in $O(\Delta^2)$. Similarly, in stage 2 the $act\_settled$ robot at a node $v$ with $\delta(v)\leq\Delta$ can meet $r_L$ at $(\Delta+1)$ nodes and thus, $r_i.count'$ can take maximum value that is in $O(\Delta^2)$. Therefore, $O(log \Delta)$ is the amount of memory needed by the robots to store the information relating to these variables. As a result, each robot only needs $O(\log \Delta)$ bits of additional memory to run the algorithm.
    \end{proof}
    
    \begin{lemma}\label{lem:special}
    When the group of unsettled robots in stage 1 are in $explore$ state and $r_i.dist=1$ then there is exactly one settled robot present along with the group which has $r_i.special=1$.
   \end{lemma}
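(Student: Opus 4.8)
The plan is to prove the statement by induction on the phases of stage 1, maintaining as invariant exactly what the lemma asserts: whenever a phase begins with $r_i.dist=1$ and $state=explore$ for the group of unsettled robots, some settled robot is co-located with the group, that robot has $special=1$, and no other robot in the whole system has $special=1$.

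First I would isolate the life-cycle of the flag $special$. Inspecting Algorithm \ref{alg:cap} and Algorithm \ref{alg:settle}, a settled robot $r_j$ receives $special=1$ only in the act of settling — $r_1$ settling at the root at the end of phase $0$, or the minimum-id robot settling at a node reached with $r_i.dist=2$ — and also (by the back-and-forth mechanism that makes a settled robot act as a virtually settled robot for its neighbours) when the group leaves the node carrying $r_j$ through an incident edge in the $explore$ state; in each of these situations the whole group is co-located with $r_j$ at that instant. Once $special=1$ is set, the $special=1$ branch of Algorithm \ref{alg:settle} makes $r_j$ travel with the group for exactly one phase, record $r_j.virtualparent$, return to its settled node, and reset $special=0$. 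Two consequences I would record: (a) a robot holding $special=1$ is co-located with the group during the whole (single) phase in which it carries the flag, and (b) since the single unsettled group settles at most one robot per phase and the flag is cleared within one phase, at most one robot ever has $special=1$ simultaneously.

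Next I would argue that the group starts a phase in the state $(r_i.dist=1,\ state=explore)$ in exactly the following three ways, each of which produces one — hence, by (b), exactly one — co-located $special=1$ robot: (i) immediately after phase $0$, when $r_1$ has just settled at the root with $special=1$ and has accompanied the group to $u(0)$; (ii) one phase after the minimum-id robot settles at a node $v$ reached with $r_i.dist=2$, where the group sets $r_i.dist=0$, moves to a neighbour of $v$, and the freshly settled robot (holding $special=1$) accompanies it, so the next phase begins with $r_i.dist$ incremented to $1$; and (iii) when the group, in $state=backtrack$ at a node carrying a settled robot $r_j$, finds an unexplored port ($(r_i.portentered+1)\bmod\delta$ equals neither $r_j.parent$ nor $r_j.virtualparent$), switches to $explore$, and steps onto a neighbour of that node — here the $dist$ bookkeeping of the $state=backtrack$ block (the decrement followed by a re-increment) leaves the group with $r_i.dist=1$, and $r_j$ accompanies the group with $special=1$ for that one phase. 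I would also observe that the state $(r_i.dist=1,\ state=explore)$ never lasts more than one phase: from it the group either backtracks (keeping $r_i.dist=1$ but switching to $state=backtrack$) or explores one more hop (moving to $r_i.dist=2$); hence the invariant must be re-established from scratch each time it applies, which is precisely what cases (i)--(iii) do. This closes the induction and yields both halves of the statement — existence from (i)--(iii) and uniqueness from (b).

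The step I expect to be the crux is case (iii): I must verify, against the $r_i.dist$ updates of the $state=backtrack$ block, that re-entering a neighbour of an already-settled node in the $explore$ state is always done with $r_i.dist=1$ and in the company of that node's settled robot with $special=1$, and in particular rule out a leftover $special=1$ robot from a just-completed sub-exploration still travelling with the group — which is exactly what the one-phase lifetime of the flag, established in the second paragraph, prevents. Everything else — the base case (i), the settling case (ii), and the uniqueness count — is routine once the life-cycle of $special$ is nailed down.
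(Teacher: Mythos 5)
Your proof is correct and rests on the same two facts the paper uses — the definition of the $special$ flag (set only at the moment the group departs a settled robot's node in the $explore$ state, cleared within one phase) and the fact that no node except the root carries more than one settled robot — so it is essentially the paper's argument, just worked out in full; the paper's own proof is a two-sentence appeal to the variable's description in Table \ref{table:symbols} plus the observation that multiple robots occupy the root only after no robot remains in the $explore$ state.
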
 
   \begin{proof} It is easy to observe this from the description of $r_i.special$ variable of a settled robot as mentioned in table \ref{table:symbols}. As no node except the root contains multiple settled robots. Also the root contains multiple robots only when no robots are left to settle, i.e. no robot is in the explore state in stage 1 anymore. Hence, the statement follows.  
   \end{proof}
   
        \begin{lemma}\label{lem:tree}
    Every tree edge in stage 1 is traversed exactly twice by the group of unsettled robots.  
    \end{lemma}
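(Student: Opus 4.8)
The plan is to show that the stage-1 movement of the group of unsettled robots, when restricted to tree edges, is precisely a depth-first ``descend-then-return'' traversal of the forest $T$ whose edges are the tree edges of Definition \ref{def:1}: every tree edge is used once on the way down and once on the way back, and never again. Concretely I would establish: (i) each tree edge is traversed exactly once in $state=explore$, at the round it becomes a tree edge (the ``discovery'' traversal); (ii) no node is discovered twice, so $T$ is a forest and every traversal of a non-tree edge is an immediately-undone bounce; (iii) the group backtracks over each tree edge exactly once; and (iv) it never repeats either traversal.

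Step (i) is immediate from Definition \ref{def:1} and Algorithm \ref{alg:cap}: a tree edge $(u,v)$ is pinned to the single round at which the group moves $u \to v$ in $explore$ and either a robot settles at $v$ (the $r_i.dist=2$ case in which no settled robot visits during the wait) or the settled robot at $u$ sets its $virtualparent$ to the entry port of $v$ (the $r_i.dist=1$ case in $explore$ with no lower-id settled robot visiting, where by Lemma \ref{lem:special} exactly one $special=1$ robot is available to do so). For step (ii) the key fact is that \emph{the ids of the robots that settle in stage 1 are increasing in their order of settling}, since the group only loses robots and always loses the smallest-id one still present. Hence for a node $v$ first discovered at round $t$, the settled robot responsible for that discovery has smaller id than every other settled neighbour of $v$: a neighbour settled before $t$ must have had larger id, as otherwise its back-and-forth movement (Algorithm \ref{alg:settle}) would have met the group during the $2\phi$-round wait and forced a backtrack at $t$, contradicting discovery; a neighbour settled after $t$ has larger id by the monotonicity just noted; and by Theorem \ref{th:nonadjacent} ``$v$ discovered'' and ``$v$ settled'' never refer to adjacent nodes. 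Consequently, on every later visit to $v$ the original discoverer meets the group within its wait and, having minimum id, forces an immediate backtrack over the edge just used, so $v$ is never re-discovered, every non-root node has exactly one incident tree edge, and every forward traversal of a non-tree edge is immediately cancelled by a backward traversal of the same edge.

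For steps (iii)--(iv): by (ii) the group's position within $T$ changes only along tree edges, and the port-increment rule together with the comparison of $portentered$ against the node's recorded $parent$ (at a settled node) or $virtualparent$ (at a node whose virtual parent is held by a settled neighbour) in the $backtrack$ branch of Algorithm \ref{alg:cap} makes this a genuine DFS stack discipline: after exhausting every port of $v$ and of everything discovered below $v$, the group backtracks through the port recorded for $v$, i.e. over $(u,v)$, and exactly once, since $portentered$ runs through each value of $v$ only once. As stage 1 is finite and, whenever robots are left over, ends with the group back at the root (Lemma \ref{lem:multiatroot}), the stack is fully unwound and every discovery traversal is matched by exactly one backtracking traversal; after this backtracking the group lies strictly above $v$ in $T$, so repeating $(u,v)$ would require either a fresh $explore$ step $u \to v$ (ruled out by (i)--(ii)) or being below $(u,v)$ once more first (same reason). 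Hence $(u,v)$ is traversed exactly twice.

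The hard part is making rigorous the claim that the tree-edge-restricted walk is a DFS descend-then-return on $T$, because of the depth-bounded exploration: a node's children in $T$ alternate between $dist=1$ nodes and settled ($dist=0$) nodes, a node below $v$ may still be adjacent to nodes outside $v$'s subtree, and the only state the group carries is $\phi$, $portentered$, $r_i.dist\le 2$ and the ids heard during waits, so the DFS-stack invariant has to be formulated entirely in those terms and propagated through Algorithm \ref{alg:cap} phase by phase. One should also flag the single boundary configuration in which the largest-id robot settles at a non-root $dist=2$ node at the very end of stage 1: the group then becomes empty immediately, and the matching backtracking traversal of that last tree edge is instead the one performed at the start of stage 2 by Algorithm \ref{alg:reach_root}; this does not affect the round count, which only uses the upper bound of two.
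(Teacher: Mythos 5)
Your proof takes essentially the same route as the paper's: a tree edge is traversed once forward at the moment it is created and once backward, and it is never re-traversed because the settled robot responsible for the discovery has the minimum id among all settled neighbours of the discovered node, so every later visit in $explore$ state is forced to backtrack immediately. Your write-up is considerably more detailed (the monotonicity of settling ids is a useful explicit ingredient that the paper leaves implicit), and the boundary case you flag—where the group empties at a non-root node and the pending backtracks on the current DFS path are performed by $r_L$ alone in stage 2—is a genuine subtlety that the paper's proof and the literal ``exactly twice by the group of unsettled robots'' phrasing gloss over.
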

    \begin{proof}
    Without loss of generality, according to Definition \ref{def:1}, let $u$ has a settled robot. the tree edge $(u,v)$ has either a settled robot at $v$, or a settled robot at $u$ that stores the parent pointer for node $v$ during the exploration of edge $(u,v)$. This ensures that $v$ is visited for the first time as we have its parent pointer stored. Thus, the edge $(u,v)$ is traversed twice once in the $explore$ state and the next in the $backtrack$ state. As mentioned in Algorithm \ref{alg:settle}, the parent pointer of node $v$ is saved by robot $r_u$ settled at node $u$ only when no robot visits $v$ with id $< r_u.id$. Hence the robots do not backtrack from $v$ with the objective of exploring node $v$ from another node with lower id robot settled on it. This proves the edge $(u,v)$ is traversed exactly two times.   
    \end{proof}
    
    \begin{lemma}\label{lem:nontree}
    Every non-tree edge is traversed at most four times by the group of unsettled robots.
    \end{lemma}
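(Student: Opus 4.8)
The plan is to classify, for a fixed non-tree edge $(u,v)$, every occasion on which the group of unsettled robots in stage 1 crosses it, and argue each direction is used at most twice. First I would recall that by Definition \ref{def:1} a non-tree edge is exactly an edge $(u,v)$ along which the group moves from $u$ to $v$ but the robot settled at $u$ does \emph{not} store a virtual parent for $v$ and no robot settles at $v$. By the case analysis of Algorithm \ref{alg:cap}, this happens in precisely two situations: (i) the group reaches $v$ with $r_i.dist=2$ and during the $2\phi$-round wait some settled robot visits, so the group backtracks; or (ii) the group reaches $v$ with $r_i.dist=1$, $state=explore$, and a settled robot with id smaller than the $special$ robot accompanying the group visits, so the group backtracks. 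In both situations the very next move of the group is to traverse $(v,u)$ in the $backtrack$ state. So every forward traversal $u\to v$ of a non-tree edge is immediately matched by a backward traversal $v\to u$; it therefore suffices to bound the number of times the group enters $v$ from $u$ in the $explore$ state (equivalently, the number of backward crossings), since forward and backward crossings come in pairs.

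Next I would bound the forward crossings $u\to v$ by examining from which node, and in which configuration, the group can be sitting at $u$ about to step to $v$. Here I use the depth restriction that is the whole point of the main algorithm: forward exploration only ever proceeds from a node that currently holds (or is accompanied by) a settled robot, with $r_i.dist\in\{0,1\}$. Concretely, a forward step $u\to v$ is taken either (a) with $r_i.dist=0$ at $u$, i.e.\ $u$ itself carries a settled robot and the group is cycling that settled robot's ports — this can produce at most one forward step through the particular port of $u$ leading to $v$ during each ``visit episode'' of the group to $u$; or (b) with $r_i.dist=1$ at $u$, i.e.\ $u$ is the one-hop neighbor of the node carrying the $special$ settled robot, and again the port-increment rule of Algorithm \ref{alg:cap} uses each outgoing port of $u$ at most once per episode. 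The key point, analogous to the argument in Lemma \ref{lem:tree} and Lemma \ref{lem:conn}, is that the group can be ``anchored'' at $u$ in configuration (a) at most once for good (the settled robot at $u$ is permanent), contributing at most $1$ forward crossing of $(u,v)$ with $r_i.dist=0$; and $u$ can serve as a $dist=1$ neighbor of an adjacent settled node for at most one other settled node — namely when $(w,u)$ is a tree edge whose virtual parent is stored — contributing at most $1$ forward crossing with $r_i.dist=1$. Summing, $(u,v)$ is traversed forward at most twice, hence at most four times total.

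The main obstacle, and the step I would spend the most care on, is ruling out that the group can repeatedly return to the configuration ``at $u$ with $r_i.dist=1$ accompanied by some $special$ robot'' through \emph{different} adjacent settled nodes and re-explore the edge to $v$ each time. I would close this by arguing that whenever the group sits at $u$ with $r_i.dist=1$, the accompanying $special$ robot $r_j$ (unique by Lemma \ref{lem:special}) either stores a virtual parent for $u$ — in which case $(r_j\text{'s node},u)$ is a tree edge and subsequent visits to $u$ are governed by that stored pointer, so $u$ is never again entered with $dist=1$ in $explore$ from a \emph{different} neighbor — or it backtracks immediately and $r_j$ resumes its back-and-forth without ever re-anchoring the group at $u$. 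Combined with the fact that a node has at most one incoming tree edge (which follows from the parent-pointer being written at most once, as in the proof of Lemma \ref{lem:tree}), this pins the number of $dist=1$ forward crossings of $(u,v)$ to at most one, and with the single possible $dist=0$ forward crossing we get the bound of four. I would also note the degenerate case where $v$ is the root: multiple robots settling at the root happens only when no robot is in $explore$ anymore (Lemma \ref{lem:special}/Lemma \ref{lem:multiatroot}), so it contributes no extra traversals beyond those already counted.
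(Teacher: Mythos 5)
Your overall strategy---pair each explore-state crossing with the backtrack that immediately follows it, then bound the explore-state crossings by counting how often an endpoint can serve as the origin of a forward step---is reasonable, but the accounting has a genuine gap: you only ever count explore-state traversals in the direction $u\to v$. A non-tree edge is also traversed in the explore state in the direction $v\to u$, namely when $v$ itself is later expanded: either a robot settled at $v$ earlier through a different edge and the group cycles $v$'s ports with $dist=0$, or $v$ is eventually entered with $dist=1$ from its smallest-id settled neighbour and its ports are then cycled, sending the group out through the port towards $u$ and immediately back. This is exactly the second pair of traversals that the paper's proof charges to the edge (``the robots reach $v$ from the smallest id settled robot in its neighborhood to explore $v$ later; at that time edge $(v,u)$ is traversed again''). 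Your total of four is instead reached by allowing $u$ to be anchored twice, once with $dist=0$ and once with $dist=1$; but these two cases are mutually exclusive for a fixed node, since a node carrying a settled robot has no settled neighbour (Theorem \ref{th:nonadjacent}) and hence is never at $dist=1$. So the $u$-side contributes at most one explore crossing, not two. The two errors cancel numerically, but the argument as written does not establish the bound: repairing the omission by adding the symmetric count for $v$ without noticing the mutual exclusivity would yield eight.

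The fix is small and lands essentially on the paper's argument: each endpoint of the edge is expanded at most once (once as a $dist=0$ node if it hosts a settled robot, or once as a $dist=1$ node from its smallest-id settled neighbour, never both); each expansion uses the port towards the other endpoint at most once in the explore state; and for a non-tree edge each such explore crossing is immediately followed by exactly one backtrack crossing. Hence at most $2\cdot 2=4$ traversals.
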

     \begin{proof}
    Let $(u,v)$ be a non-tree edge. According to Definition \ref{def:1}, the robots backtrack from node $v$ and the parent pointer for $v$ is not yet stored. Till this round, the edge $(u,v)$ has been traversed twice. The robots reach $v$ from the smallest id settled robot in its neighborhood to explore $v$ later. At that time edge $(v,u)$ is traversed again. Hence, every non-tree edge is traversed at most four times.
    \end{proof}
    
    \begin{lemma}\label{lem:spanning}
   The graph induced by the tree edges is connected and cycle free.  
    \end{lemma}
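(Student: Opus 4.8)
\noindent\emph{Proof plan.} The plan is to show that the subgraph formed by the tree edges is a tree, by proving separately that it is connected (and contains the root) and that it has exactly one fewer edge than vertex. For the second part I would orient each tree edge towards its \emph{head}: when an edge $(u,v)$ becomes a tree edge via Definition~\ref{def:1}, the head is $v$ --- either the node where the minimum-id robot freshly settles, or the unoccupied node whose parent pointer is freshly stored by the robot settled at $u$. If I can show that every node is the head of at most one tree edge and that the root is the head of none, then the edge-to-head map is injective and the subgraph has at most $|V|-1$ edges (the root being one of its vertices, since it is the tail of the first tree edge, yet not a head); a connected graph with at most $|V|-1$ edges is a tree, which is the claim. (If no tree edge is ever created, e.g.\ $k=1$, there is nothing to prove.)

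For connectivity I would induct on the round in which a tree edge is created, proving that after each creation the tree edges created so far form a connected subgraph containing the root. The base case is the edge from the root to its port-$0$ neighbour, created when $r_1$ --- just settled at the root --- records that neighbour's parent pointer. For the inductive step it suffices to show that the tail $u$ of a newly created tree edge $(u,v)$ already lies in the current subgraph. If $u$ is a settled node other than the root, then $u$ became settled through a distance-$2$ arrival of the group, which is itself a tree-edge creation with head $u$ at an earlier round. Otherwise $u$ is the unoccupied distance-$1$ node from which the group stepped forward onto the freshly settled $v$; by Algorithm~\ref{alg:cap} this step is taken only when, during the preceding wait at $u$, the accompanying special robot (settled at some neighbour $w$ of $u$) recorded the parent pointer of $u$, i.e.\ a tree-edge creation with head $u$ one phase earlier. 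Either way $u$ is present, so adding $(u,v)$ keeps the subgraph connected and the root inside it.

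For ``at most one head per node'' I would use two facts about stage~1. (a) A robot settles at a node only if that node has no settled robot in its one-hop neighbourhood at that time, and the same rule forbids any later robot from settling at a neighbour of a settled node; hence at no time are two adjacent nodes both occupied by settled robots. Consequently a node that ever receives a stored parent pointer is unoccupied at that moment and can never itself become settled, so a head is either a node that gets settled (at most once) or a node that gets a stored parent pointer, never both. (b) By the $2\phi$-waiting mechanism (the argument already used for Theorem~\ref{th:nonadjacent}), whenever the group waits at a node $v$, every settled robot adjacent to $v$ meets the group there. Using (b): if the special robot $r_w$, settled at a neighbour $w$ of $v$, records the parent pointer of $v$, then no settled neighbour $x$ of $v$ has $r_x.id < r_w.id$ --- such an $x$ was already settled before $w$ settled (since $r_w$ had the smallest id in the group the moment it settled), hence would have met the group at $v$ and forced a backtrack, preventing the recording. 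Therefore every tree edge with head $v$ of this type has the same tail, namely the unique minimum-id settled neighbour $w$ of $v$; and since the port counter at $w$ is incremented on each visit of the group and its exploration out of $w$ ends after one full cycle, the group steps forward from $w$ through the port towards $v$ at most once, so that tree edge is created at most once. Lastly the root is never a head: $r_1$ has the smallest id and parent pointer $-1$, so whenever the group is at the root it finds $r_1$ there, excluding both storing the root's parent pointer and settling at the root.

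The step I expect to be the main obstacle is the ``at most one head per node'' argument, precisely because a node's recorded parent pointer can be overwritten by a later port exploration of its recorder, so one must rule out the node being recorded afresh from a different neighbour later; the id-versus-settling-order reasoning together with the ``no two adjacent settled nodes'' invariant is what closes this, and making the timing of these events precise is the delicate point.
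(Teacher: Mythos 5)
Your proof is correct, and for the acyclicity half it takes a genuinely different route from the paper. On connectivity the two arguments are essentially the same induction over the order of tree-edge creation: the paper tracks the group and argues that between two consecutive creations it moves only along already-built tree edges, while you argue directly that the tail of each new tree edge is either the root or the head of an earlier tree edge (a settled non-root tail is the head of the edge through which the group arrived when it settled; an unoccupied distance-$1$ tail is the head of the edge along which its parent pointer was stored one or more phases earlier). On acyclicity the paper argues by contradiction on a hypothetical cycle, taking the last tree edge $\overline{u_iu_{i+1}}$ that closes it and showing $u_{i+1}$ would have to be unoccupied with settled neighbours at both $u_i$ and $u_{i+2}$, of which only the smaller-id one can store $u_{i+1}$'s parent pointer; you instead orient every tree edge toward its head, prove the head map is injective into $V\setminus\{\mathrm{root}\}$, and conclude $|E|\le |V|-1$, which together with connectivity yields a tree. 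Both hinge on the same key fact---an unoccupied node's parent pointer is stored only from its minimum-id settled neighbour, which you correctly derive from the increasing-id settling order plus the $2\phi$-wait meeting guarantee---but your counting formulation packages it more cleanly and avoids the paper's case analysis over the orientation of the two cycle edges meeting at $u_{i+1}$. One loose end you share with the paper: read literally, Definition~\ref{def:1} would also make the edge through which the leftover robots finally re-enter and settle at the root a ``tree edge'' with the root as head; both proofs implicitly exclude this degenerate clause, and it would be worth saying so explicitly.
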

    \begin{proof}
    Consider a rooted configuration on a graph $G$ with root $u$ such that degree of $u$ is at least 2 and also $k\ge 1$. First we show that the tree edges form a connected component. It is easy to see that first two tree edges form a connected component. Let $e_1, e_2,..., e_h$ be the first $h$ tree edges and they form a connected component. Let there be still a group of unsettled robots that is doing the traversal. Let $e_h=\overline{uw}$ for some nodes $u$, $w$ such that the tree edge was formed when the group of unsettled robot visited $w$ from $u$. If $\overline{ww'}$ becomes a tree edge for some neighbor $w'$ of $w$ then we are done, i.e., the $(h+1)$th tree edge also remains in the same connected component. Else, if no more associated edge of $w$ becomes a tree edge, the group backtracks from $w$ via a tree edge and reaches a new node $v$, say. Again, either an adjacent edge of $v$ becomes a new tree edge (in which case we are done), or it backtracks  through another tree edge. And in this way the group continues to stay on a path consisting of tree edges until if finds a new tree edge, or it completes the exploration and all the robots of the group settles at the root. Whatever be the case, the tree edges form a connected component. 
    
    Now we prove that the induced graph is cycle free. Let us assume on contrary that there is a cycle consisting of the tree edges. Let $u_1$, $u_2$, ..., $u_p$, $u_1$ be the cycle consisting of the tree edges. W.l.o.g., assume that $\overline{u_iu_{i+1}}$ be the last tree edge due to which cycle is formed and group of unsettled robots moved from $u_i$ to $u_{i+1}$. Now we have two cases: either there is a settled robot at $u_{i+1}$ or there is no settled robot at $u_{i+1}$. In case there is a settled robot at $u_{i+1}$, then the group of unsettled robots should have done a backtrack from $u_{i+1}$ to $u_i$ and hence $\overline{u_iu_{i+1}}$ can not be a tree edge. This is a contradiction to our assumption. So, let us assume there is no settled robot at $u_{i+1}$. Definition \ref{def:1} implies there will be settled robots both at $u_i$ and $u_{i+2}$. Now, $u_{i+1}$ is at one hop distance from these two settled robots and the exploration is being done from $u_i$ to $u_{i+1}$. Either of the two settled robots at $u_i$ and $u_{i+2}$ have smaller id. If the settled robot at $u_{i+2}$ has smaller id then the robots will backtrack from $u_{i+1}$ to $u_i$ and thus $\overline{u_iu_{i+1}}$ will not be a tree edge. However, if the settled robot at $u_i$ has smaller id then while exploring the node $u_{i+2}$ and traversing from $u_{i+2}$ to $u_{i+1}$, the group of unsettled robots must have backtracked due to presence of a smaller id settled robot at $u_i$ thus forming $\overline{u_{i+2}u_{i+1}}$ as the non tree edge. Thus, we see that $\overline{u_iu_{i+1}}$ and $\overline{u_{i+2}u_{i+1}}$ cannot be tree edges simultaneously. Hence, our assumption of the presence of a cycle consisting of all the tree edges is wrong and the graph induced by the tree edges is connected and cycle free.
    \end{proof}
    
 \begin{lemma}
    By the time stage 2 finishes, each robot terminates.\label{lem:7}
 \end{lemma}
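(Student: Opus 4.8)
The plan is to show that \emph{stage~2 faithfully simulates stage~1}, and then read off termination of every robot from this simulation together with the counter-matching mechanism ($r_j.count$ versus $r_j.count'$).

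First I would prove a \emph{simulation invariant} by induction on phases: the sequence of nodes visited by $r_L$ in stage~2 (starting from the root), together with the values of $r_L.dist$, $r_L.state$ and $r_L.portentered$ at the start of each phase, coincides exactly with the sequence of nodes visited by the group of unsettled robots in stage~1 and their corresponding variables. The point is that Algorithm~\ref{alg:Begin_Termination} run by $r_L$ is, line for line, Algorithm~\ref{alg:cap} run by the unsettled group, with ``settled robot'' replaced by ``$act\_settled$ robot''; since the graph, the port numbering and the ids are fixed and the algorithm is deterministic, the two traversals agree provided the ``environment'' agrees, i.e.\ provided that whenever $r_L$ waits $2\phi$ rounds at a node $v$, exactly the same multiset of robots (same ids, same $special$ flags) visits it as visited the unsettled group when it waited at $v$ in stage~1. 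Granting this, the $virtualparent$ pointers set by the $act\_settled$ robots in stage~2 equal those set by the corresponding settled robots in stage~1 (same rule, same triggering moment), so the forward/backtrack decisions match, closing the induction.

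The core obstacle is this environment-agreement claim. Here I would use: (i) every robot $r_j$ that settled in stage~1 meets the group at least once, namely at the moment it settled (at $dist=2$, with the group present), so $r_j.count\ge 1$; (ii) when $r_L$'s stage-2 traversal reaches the node $w$ where $r_j$ settled, it finds no $act\_settled$ robot visiting --- matching stage~1, where no settled robot visited, which is precisely why $r_j$ could settle --- whereupon $r_j$ becomes $act\_settled$ with $r_j.parent$ equal to the port through which $r_L$ entered $w$, the same value as in stage~1; and (iii) thereafter $r_j$ performs exactly the back-and-forth movement of a stage-1 settled robot (Algorithm~\ref{alg:stage2settled} versus Algorithm~\ref{alg:settle}), so it is present at the right neighbours at the right rounds. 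To make (ii)--(iii) rigorous one must check, as in the correctness argument for stage~1, that the waiting window $2\phi$ is long enough for every relevant $act\_settled$ robot at a neighbour of the current node to show up: this holds because $\phi$ is monotone non-decreasing and, by the time $r_L$ is at a node $v$, $\phi$ is already at least the degree of any neighbouring node at which a robot settled in stage~1 (that node was visited earlier in the traversal, updating $\phi$). An ordering point is also needed --- a robot must already be $act\_settled$ before $r_L$ reaches a neighbour of its home node --- and this follows because in stage~1 the group was at that home node (the robot settled there) before any later visit to a neighbour, and stage~2 replays the same order.

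Finally, with the simulation established, termination is immediate. Each settled robot $r_j$ increments $r_j.count'$ exactly when, in the simulation, it meets $r_L$; since the two traversals agree, $r_j$ meets $r_L$ the same number of times it met the unsettled group in stage~1, so $r_j.count'$ reaches $r_j.count$ and $r_j$ terminates at its original node (Algorithm~\ref{alg:stage2settled}), no later than the simulated analogue of the last stage-1 round in which it met the group. The robot $r_L$, having the largest id, is never displaced during its stage-2 traversal and so completes the whole replayed traversal; at its end it either reaches the node where it settled in stage~1 at $dist=2$ with no $act\_settled$ robot visiting and terminates, or it backtracks to the root with all ports explored and terminates together with all robots still settled there (Algorithms~\ref{alg:Begin_Termination} and \ref{alg:reach_root}, consistent with Lemma~\ref{lem:multiatroot}). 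Since the replayed traversal has the same finite length as stage~1 --- each tree edge traversed twice, each non-tree edge at most four times, by Lemmas~\ref{lem:tree} and \ref{lem:nontree} --- stage~2 finishes, and at that moment every robot has terminated.
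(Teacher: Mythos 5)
Your proposal is correct and takes essentially the same route as the paper's own proof, which simply asserts that stage~2 is a replay of stage~1 (so each $act\_settled$ robot meets $r_L$ exactly as many times as the corresponding settled robot met the unsettled group, forcing $r_j.count'$ to reach $r_j.count$) --- precisely your simulation invariant. The only difference is that you spell out the induction on phases, the environment-agreement claim, and the $\phi$/ordering details that the paper leaves implicit.
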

 \begin{proof}
    Since the robot $r_L$ with $r_L.terminate=1$ replicates the group of unsettled robots in stage 1 and all the robots with $r_L.act\_settled=1$ replicates the settled robots in stage 1, so, the number of times each settled robot meets with the group of unsettled robots in stage 1 is same as the number of times each $act\_settled$ robot meets with $r_L$.
    As mentioned in section \ref{sec:3.1}, the stage 2 is replay of stage 1. So the correctness of stage 1 implies the correctness of stage 2. And hence for each settled robot $r_i$ except $r_L$, $r_i.count=r_i.count,$ and terminates. Finally, $r_L$ settles at the node where it settled at the end of stage 1 and terminates.
    \end{proof}
 
\begin{theorem}
The Algorithm \ref{alg:mainalgorithm} achieves D-2-D with termination in $2\Delta(8m-3n+3)$ rounds on arbitrary rooted graphs.
\end{theorem}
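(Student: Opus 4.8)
The plan is to separate the claim into a \emph{correctness} part and a \emph{round-complexity} part, assembling both from the lemmas already established. For correctness, I would combine Theorem~\ref{th:nonadjacent} (no two settled robots are adjacent), Lemma~\ref{lem:multiatroot} (multiple robots can pile up only at the root), and Theorem~\ref{th:novacant} together with Observation~\ref{obs:maxis} (whenever that happens, the occupied nodes form a maximal independent set, so every robot that has a legal empty node does occupy one), and finally Lemma~\ref{lem:7} (every robot terminates). These together show that Algorithm~\ref{alg:mainalgorithm} solves D-2-D with termination, so only the round bound remains.

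For the round bound the key point is that the group of unsettled robots in stage~1, and the robot $r_L$ in stage~2, each make exactly one edge move per \emph{phase}, and a phase lasts $2\phi$ rounds with $\phi\le\Delta$, hence at most $2\Delta$ rounds; the back-and-forth motions of the settled and $act\_settled$ robots take place \emph{within} these phases and cost no extra rounds, because a settled robot that encounters the travelling group waits with it until the group leaves (Algorithm~\ref{alg:settle} and Algorithm~\ref{alg:stage2settled}). So it suffices to count edge traversals. In stage~1, by Lemma~\ref{lem:spanning} the tree edges form a spanning tree on all $n$ nodes (every node is eventually reached, and a non-backtracked arrival at a node other than the root creates an incident tree edge, so exactly $n-1$ tree edges and $m-(n-1)$ non-tree edges exist). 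By Lemma~\ref{lem:tree} each tree edge is traversed exactly twice and by Lemma~\ref{lem:nontree} each non-tree edge at most four times, giving at most $2(n-1)+4(m-n+1)=4m-2n+2$ phases, i.e. at most $2\Delta(4m-2n+2)$ rounds in stage~1.

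For stage~2, $r_L$ first walks from the node $u_l$ where it settled at the end of stage~1 back to the root along parent pointers (Algorithm~\ref{alg:reach_root}); this is a simple path in the tree of tree edges, so it has at most $n-1$ edges and costs at most $n-1$ phases (one $2\phi$-round wait per node on the way up). After reaching the root, $r_L$ replays stage~1 verbatim, itself in the role of the unsettled group and the $act\_settled$ robots in the role of the settled robots, so this replay uses the same number of phases as stage~1, namely at most $4m-2n+2$. Thus stage~2 costs at most $2\Delta\big((4m-2n+2)+(n-1)\big)=2\Delta(4m-n+1)$ rounds, and summing the two stages yields at most $2\Delta(4m-2n+2)+2\Delta(4m-n+1)=2\Delta(8m-3n+3)$ rounds, matching the claimed bound (the initial settling at the root in phase~$0$ and the first move are already counted among the stage-1 traversals).

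The part I expect to require the most care is the stage-1 phase count, specifically two points: (i) that the tree edges really cover all $n$ vertices, since Definition~\ref{def:1} only fires on a first \emph{non-backtracked} arrival, so one must argue that every node is eventually explored from its lowest-id settled neighbour without being pre-empted (using Lemma~\ref{lem:special} and the arguments of Lemmas~\ref{lem:conn} and~\ref{lem:spanning}); and (ii) that a phase corresponds to \emph{exactly one} edge move of the group, so that ``number of phases $=$ number of edge traversals'' holds exactly and the constants of Lemmas~\ref{lem:tree}--\ref{lem:nontree} may be summed directly. Bounding $\phi\le\Delta$, charging the settled robots' detours to the group's phases, and bounding $r_L$'s return path by $n-1$ are then routine.
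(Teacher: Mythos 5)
Your proposal is correct and follows essentially the same route as the paper's own proof: stage 1 is bounded by $2\Delta(4m-2n+2)$ rounds via the edge-traversal counts of Lemmas~\ref{lem:tree}, \ref{lem:nontree} and \ref{lem:spanning}, the return of $r_L$ to the root adds at most $2\Delta(n-1)$ rounds, and stage 2 is a replica of stage 1, summing to $2\Delta(8m-3n+3)$. The only difference is that you additionally spell out the correctness half by citing the earlier theorems and note (rightly) that one needs $n-1$ tree edges, not merely at most $n-1$, for the arithmetic to close — a point the paper glosses over.
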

\begin{proof} 
     It is clear from Lemma \ref{lem:tree} and Lemma \ref{lem:nontree} that every edge is traversed at most 4 times except the tree edges. Also from Lemma \ref{lem:spanning}, there can be at most $(n-1)$ tree edges. So the total number of edge traversal is no more than $4(m-(n-1))+2(n-1)=4m-2n+2$. After each edge traversal, the robots wait for $2\phi$ rounds and $\phi \leq \Delta$. So at most $2\Delta (4m-2n+2)$ rounds are required for all the robots to settle. Thus Stage 1 is completed within $2\Delta (4m-2n+2)$ many rounds.
     After the last robot settles, it may take at most $2\Delta(n-1)$ rounds to reach the root node in the worst-case. Now, the remaining part of stage 2 is replica of the stage 1 of our algorithm. Thus, it takes $2\Delta(8m-3n+3)$ many rounds in order to achieve D-2-D with termination
     \end{proof}
     
\section{Lower Bound}\label{sec:lowerbound}
In this section we discuss the lower bound on number of rounds of D-2-D problem considering robots do not have more than $O(\log\Delta)$ additional memory. We start by defining view of a node to a robot.
\begin{definition}\label{def:view}
\noindent\textbf{View:} View of a node $v$ to a robot is the information of whether there is a settled robot at any of its one hop neighbor or not, including $v$.  
\end{definition}

Next we prove the theorem by constructing a class of graphs. The idea is that, each graph in the class is a regular graph of degree $n-1$ and has $2n$ nodes. We start with two robots, one of which settles first and the other looks for a node to settle. The graphs are such that, unless the unsettled robot reaches two particular nodes, it will not be able to differentiate the graph with a clique. So, before reaching one of those nodes, if it decides to settle, that will lead to a wrong solution. We show that, with limited memory, finding one of those nodes requires at least 
$\Omega(m\Delta)$ rounds.

\begin{theorem}
The lower bound on number of rounds of D-2-D problem on arbitrary graphs is $\Omega(m\Delta)$ considering robots have no more than $O(\log\Delta)$ additional memory.
\end{theorem}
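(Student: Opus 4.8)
The plan is to construct an explicit family of graphs on which any algorithm using only $O(\log\Delta)$ memory per robot must spend $\Omega(m\Delta)$ rounds. Following the sketch already given in the excerpt, I would fix a parameter $n$ and build, for each graph in the family, a base on $2n$ vertices that is $(n-1)$-regular, so that $\Delta = n-1$ and $m = \Theta(n^2) = \Theta(\Delta^2)$; then $m\Delta = \Theta(n^3)$. I would take the rooted initial configuration with exactly two robots at the root: one of them settles (it must, since otherwise no node ever gets a settled robot, violating the ``maximal independent set when $k\ge n$'' — here $k=2<n$, so actually we only need that the second robot settles somewhere legal). The essential design feature is that the graph looks locally identical to a clique $K_{2n}$ everywhere \emph{except} at two special ``target'' vertices whose neighbourhood structure differs; until the unsettled robot actually stands on one of these targets (or an immediate neighbour of one), its \emph{view} (Definition \ref{def:view}) together with its $O(\log\Delta)$-bit state cannot distinguish the real graph from the clique. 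In the clique, after the first robot settles, the only legal D-2-D configuration has the second robot co-located at the root; but in our graph there exists a legal distinct node for it, so terminating early (settling at the root, or at any ordinary vertex) is incorrect. Hence the algorithm is forced to keep searching until it locates a target.

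The core of the argument is then an adversary/indistinguishability bound on how long locating a target takes. Here I would use a port-labelling adversary: because the robot carries only $O(\log\Delta)$ bits, as it walks around it can remember only $O(1)$ many port numbers, so it cannot record a full description of which vertices it has visited. I would set up the family so that the two target vertices are ``hidden'' among the $2n$ vertices and, crucially, among the $\Delta = n-1$ edges at each vertex: the adversary reveals port labels on the fly so that every time the robot leaves a vertex through a port it has not used before, the adversary is free to send it to any not-yet-forced vertex. A counting/potential argument then shows that to guarantee reaching a target the robot must traverse $\Omega(m)$ distinct edges, and — this is the point where the factor $\Delta$ enters — because of the bounded memory it cannot systematically enumerate the $\Delta$ ports at a vertex without, in the worst case, revisiting that vertex $\Omega(\Delta)$ times (it cannot remember which ports it already tried), so the total number of edge traversals, and hence rounds, is $\Omega(m\Delta)$. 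Formally I would phrase this as: any deterministic robot with $s$ bits of state, run on the adversarial family, visits at most $2^{s}\cdot t / \Omega(\Delta)$ distinct (vertex, next-port) pairs in $t$ rounds unless it has hit a target, and since there are $\Omega(m)$ such pairs that must be checked, $t = \Omega(m\Delta/2^{s}) = \Omega(m\Delta)$ when $s = O(\log\Delta)$.

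The key steps in order: (1) define the graph family $\mathcal{G}_n$ — start from $K_{2n}$, delete a perfect matching or a small structure incident to the two targets and re-wire so regularity and connectivity are preserved, and verify $\Delta=n-1$, $m=\Theta(\Delta^2)$; (2) prove the indistinguishability lemma: for a robot in state $q$ whose walk so far has not touched a target or its neighbour, the next-round behaviour is identical to its behaviour on $K_{2n}$ with a matching port labelling — so an early termination that is correct on $K_{2n}$ is wrong on every $G\in\mathcal{G}_n$ that places a target outside the explored region; (3) the adversary argument / counting bound showing $\Omega(m\Delta)$ rounds are needed before a target is necessarily reached; (4) assemble: an algorithm correct on all of $\mathcal{G}_n$ cannot terminate before round $\Omega(m\Delta)$, giving the lower bound.

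I expect the main obstacle to be step (3): making the adversary rigorous. One has to be careful that the adversary's on-the-fly port assignments remain \emph{consistent} with a single fixed graph in $\mathcal{G}_n$ — the robot may return to an already-visited vertex and the ports there are now committed — and that the memory bound genuinely prevents the robot from ``remembering its way'' to a target in fewer than $\Omega(\Delta)$ revisits per vertex. The honest way to do this is a careful potential function counting committed (vertex, port) pairs against elapsed rounds and against the $2^{O(\log\Delta)}=\mathrm{poly}(\Delta)$ reachable states, and arguing that the robot's ``frontier'' of genuinely new information grows by at most a $1/\Delta$ fraction per round. I would also need to handle co-located communication, but with only two robots and one of them settled and doing the periodic back-and-forth of Algorithm \ref{alg:settle}, the settled robot conveys no information about the global graph that the unsettled robot could not compute itself, so this does not weaken the bound.
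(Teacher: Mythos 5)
Your construction and the overall shape of the argument (a $(n-1)$-regular family on $2n$ vertices built from cliques, two robots, and the observation that the unsettled robot cannot distinguish the graph from a clique---where it would have to co-locate with the settled robot---until it reaches one of two hidden target vertices) match the paper's proof. The genuine gap is in your step (3), where the factor $\Delta$ is supposed to come from. You attribute it to the robot being unable to ``remember which ports it already tried'' at a vertex, forcing $\Omega(\Delta)$ revisits. This fails for two reasons. First, $O(\log\Delta)$ bits are exactly enough to hold a single port counter and increment it, so a robot \emph{can} systematically enumerate the ports of the vertex it is currently working on (this is precisely what the paper's own upper-bound algorithm does with $portentered$); your adversary cannot force $\Omega(\Delta)$ wasted revisits on this ground. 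Second, even if each vertex were revisited $\Omega(\Delta)$ times, that totals $O(n\Delta)=O(m)$ edge traversals, not $\Omega(m\Delta)$; and your closing formula $t=\Omega(m\Delta/2^{s})$ with $s=O(\log\Delta)$ only yields $\Omega(m\Delta/\mathrm{poly}(\Delta))$, which is not the claimed bound.

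The paper extracts the factor $\Delta$ from a different and problem-specific source: the cost of acquiring the \emph{view} at each candidate node. Because nodes store nothing and the only settled robot advertises its position solely by physically visiting its $\Delta$ neighbours one per round, deciding whether a given candidate node has a settled neighbour (i.e., whether it is legal to settle there) costs $\Omega(\Delta)$ rounds per candidate, on a constant fraction of the port-labelled instances. Combined with the fact that the unsettled robot must examine $\Omega(\Delta^{2})=\Omega(m)$ of the settled robot's two-hop neighbours before it can hit $u_1$ or $u_2$ in at least one instance of the class, this gives $\Omega(m)\cdot\Omega(\Delta)=\Omega(m\Delta)$. If you want to salvage your write-up, replace your port-enumeration adversary with this ``view costs $\Omega(\Delta)$ rounds per candidate node'' lemma; the rest of your outline (steps (1), (2) and (4)) then goes through essentially as in the paper.
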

\begin{proof}
We will prove this using a class of graphs where we show that there will be at least one graph for which the robots require at least  $\frac{\Delta m}{12}$ many rounds to complete D-2-D. Let us consider two cliques of $n$ vertices but with one edge missing from each of them. Let $v_1$, $v_2$, ..., $v_n$ be the vertices of the first clique $Q_1$ and $u_1$, $u_2$, ..., $u_n$ be the vertices of the second clique $Q_2$. Let $\overline{v_1v_2}$ be the missing edge from the first clique and $\overline{u_1u_2}$ be missing from the second clique. We join $v_1$ with $u_1$ and $v_2$ with $u_2$. Now, the graph $G$ has $2n$ nodes with $\Delta=n-1$. Considering all possible different port-numbering of this graph gives us a graph class $\mathscr{G}$ which has cardinality equal to $[(n-1)!]^{2n}$. Let two robots $r_1$ and $r_2$ are initially present at $v_j$ where $j\neq{1,2}$. Let us assume that there exists an algorithm $\mathscr{A}$ which solves D-2-D in time less than $\frac{m\Delta}{12}$. Let $r_1$ settles first and at node $w$. We can claim that there will be at least $\frac{|\mathscr{G}|}{2}$ graphs where, $w \notin \{ v_1, v_2, u_1, u_2 \}$. W.l.o.g. let $w=v_i$ be some vertex of $Q_1$. Let us denote $\frac{|\mathscr{G}|}{2}$ by $N$.

As the robots have $O(\log \Delta)$ memory, they can remember only a constant many port numbers at a time. We provide $r_2$ more power by letting it know that there is a node to settle within two hop distance of $v_i$. The robot $r_2$ aims to explore all the $\Delta(\Delta-1)$ many two hop neighbors. There are enough graphs(in particular, $\frac{N}{4}$) wherein the robot $r_2$ needs to explore at least $\frac{\Delta(\Delta-1)}{2}$ many vertices before exploring $u_1$ or $u_2$. Unless it reaches $u_1$ or $u_2$ and has the view, $r_2$ can not distinguish any graph of our graph class from a clique of $n$ nodes. 

Let the sequence in which the nodes are explored is as follows \{$v_{i_1}$, $v_{i_2}$,..., $v_{i_{\frac{\Delta(\Delta-1)}{2}}}$\}. When $r_2$ reaches $v_{i_1}$, it needs to know the view of the graph. 
If $v_{i_1}$ is reached from $v_i$ directly, then getting the view takes only one round as $r_2$ understands it is one hop away from $v_i$. Else, if $v_{i_1}$ is not reached directly from $v_i$, then it is easy to see that, in at least half of the graphs, $r_2$ needs at least $\frac{\Delta}{2}$ rounds to get the view.
So, there exists enough instances(in particular at least $\frac{N}{4.2}$) where $r_2$ requires $\frac{\Delta}{2}$ rounds to find the view. Similarly, after reaching $v_{i_2}$ there exists at least $\frac{N}{4.2^2}$ many graphs where $\frac{\Delta}{2}$ many rounds will be required to find the view of that node. In similar fashion, at $v_{i_{\frac{\Delta(\Delta-1)}{2}}}$ there exists at least $\frac{N}{4.2^{\frac{\Delta(\Delta-1)}{2}}}$ many graphs. Now $\frac{N}{4.2^{\frac{\Delta(\Delta-1)}{2}}}$ is a function of $n$ and the value becomes more than 1 for all $n\geq3$. 

Hence, there is at least one graph where robot $r_2$ needs to spend at least $\frac{\Delta(\Delta-1)}{2}.\frac{\Delta}{2}$ rounds to settle. For $n\geq3$, $\Delta \geq \frac{M}{3}$ where $M=2n$. Thus, $\frac{\Delta(\Delta-1)}{2}.\frac{\Delta}{2} \geq \frac{M}{3}.\frac{(\Delta-1)}{2}.\frac{\Delta}{2}=m.\frac{\Delta-1}{6} \geq \frac{m\Delta}{12}$. This proves there is at least one such instance in the class $\mathscr{G}$ where the robot $r_2$ requires $\frac{m\Delta}{12}$ many rounds to complete D-2-D, else both $r_1$ and $r_2$ settles either on $Q_1$ or on $Q_2$ and this leads to wrong D-2-D. This completes the proof.
\end{proof}

\section{Conclusion and Future Work}\label{Conclusion}
We propose a variant of the dispersion problem and provide an algorithm that solves it for the rooted initial configuration with $O(\log\Delta)$ additional memory per robot and in $2\Delta(8m-3n+3)$ synchronous rounds. We also provide a $\Omega(m\Delta)$ lower bound of the problem on number of rounds. In some cases, we guarantee forming a maximal independent set by the robots which can be of independent interest. It will be interesting to see how to solve the problem for arbitrary initial configuration of the robots. 

\bibliographystyle{plainurl}
\bibliography{Bib.bib}

\begin{thebibliography}{10}

\bibitem{AgarwallaAMKS18}
Ankush Agarwalla, John Augustine, William K.~Moses Jr., Sankar~Madhav K., and
  Arvind~Krishna Sridhar.
\newblock Deterministic dispersion of mobile robots in dynamic rings.
\newblock In {\em {ICDCN}}, pages 19:1--19:4. {ACM}, 2018.

\bibitem{Augustine18}
John Augustine and William K.~Moses Jr.
\newblock Dispersion of mobile robots: {A} study of memory-time trade-offs.
\newblock In {\em ICDCN}, pages 1:1--1:10, 2018.

\bibitem{BarriereFBS09}
Lali Barri{\`{e}}re, P~Flocchini, E~M Barrameda, and N~Santoro.
\newblock Uniform scattering of autonomous mobile robots in a grid.
\newblock In {\em {IPDPS}}, pages 1--8, 2009.

\bibitem{DasBS21}
Archak Das, Kaustav Bose, and Buddhadeb Sau.
\newblock Memory optimal dispersion by anonymous mobile robots.
\newblock In {\em CALDAM}, pages 426--439, 2021.

\bibitem{ShantanuDas19}
Shantanu Das.
\newblock Graph explorations with mobile agents.
\newblock In {\em Distributed Computing by Mobile Entities, Current Research in
  Moving and Computing}, pages 403--422. Springer, 2019.

\bibitem{DereniowskiDKPU15}
D.~Dereniowski, Y.~Disser, Adrian Kosowski, D.~Pajak, and P.~Uznanski.
\newblock Fast collaborative graph exploration.
\newblock {\em Inf. Comput.}, 243:37--49, 2015.

\bibitem{ElorB11}
Yotam Elor and Alfred~M. Bruckstein.
\newblock Uniform multi-agent deployment on a ring.
\newblock {\em Theor. Comput. Sci.}, 412(8-10):783--795, 2011.

\bibitem{Barun}
Barun Gorain, Partha~Sarathi Mandal, Kaushik Mondal, and Supantha Pandit.
\newblock Collaborative dispersion by silent robots.
\newblock {\em CoRR}, abs/2202.05710, 2022.

\bibitem{KshemkalyaniMS19}
A~D. Kshemkalyani, A~R Molla, and G~Sharma.
\newblock Fast dispersion of mobile robots on arbitrary graphs.
\newblock In {\em ALGOSENSORS}, pages 23--40, 2019.

\bibitem{kshemkalyaniMS20}
A.~D. Kshemkalyani, A.~R. Molla, and G.~Sharma.
\newblock Dispersion of mobile robots on grids.
\newblock In {\em {WALCOM}}, volume 12049, pages 183--197. Springer, 2020.

\bibitem{KshemkalyaniA19}
Ajay~D. Kshemkalyani and Faizan Ali.
\newblock Efficient dispersion of mobile robots on graphs.
\newblock In {\em ICDCN}, pages 218--227, 2019.

\bibitem{KshemkalyaniMS22}
Ajay~D. Kshemkalyani, Anisur~Rahaman Molla, and Gokarna Sharma.
\newblock Dispersion of mobile robots using global communication.
\newblock {\em J. Parallel Distributed Comput.}, 161:100--117, 2022.

\bibitem{KshemkalyaniS21}
Ajay~D. Kshemkalyani and Gokarna Sharma.
\newblock Near-optimal dispersion on arbitrary anonymous graphs.
\newblock In {\em OPODIS}, pages 8:1--8:19, 2021.

\bibitem{MollaM19}
Anisur~Rahaman Molla and William K.~Moses Jr.
\newblock Dispersion of mobile robots: The power of randomness.
\newblock In {\em {TAMC}}, volume 11436, pages 481--500, 2019.

\bibitem{Ani}
Anisur~Rahaman Molla, Kaushik Mondal, and William K.~Moses Jr.
\newblock Efficient dispersion on an anonymous ring in the presence of weak
  byzantine robots.
\newblock In {\em {ALGOSENSORS}}, volume 12503, pages 154--169. Springer, 2020.

\bibitem{Molla0M21}
Anisur~Rahaman Molla, Kaushik Mondal, and William K.~Moses Jr.
\newblock Byzantine dispersion on graphs.
\newblock In {\em IPDPS}, pages 942--951. {IEEE}, 2021.

\bibitem{MollaMM21}
Anisur~Rahaman Molla, Kaushik Mondal, and William K.~Moses Jr.
\newblock Optimal dispersion on an anonymous ring in the presence of weak
  byzantine robots.
\newblock {\em Theor. Comput. Sci.}, 887:111--121, 2021.

\bibitem{sai}
S.~Vamshi Samala, S.~Pramanick, D.~Pattanayak, and P.~S. Mandal.
\newblock Filling {MIS} vertices by myopic luminous robots.
\newblock {\em CoRR}, abs/2107.04885, 2021.

\bibitem{ShibataMOKM16}
Masahiro Shibata, Toshiya Mega, Fukuhito Ooshita, Hirotsugu Kakugawa, and
  Toshimitsu Masuzawa.
\newblock Uniform deployment of mobile agents in asynchronous rings.
\newblock {\em J. Parallel Distributed Comput.}, 119:92--106, 2018.

\bibitem{ShintakuSKM20}
Takahiro Shintaku, Yuichi Sudo, Hirotsugu Kakugawa, and Toshimitsu Masuzawa.
\newblock Efficient dispersion of mobile agents without global knowledge.
\newblock In {\em SSS}, volume 12514, pages 280--294, 2020.

\end{thebibliography}

\end{document}